\newtheorem{theorem}{Theorem}
\newtheorem{definition}{Definition}
  \providecommand\BibTeX{{%
    \normalfont B\kern-0.5em{\scshape i\kern-0.25em b}\kern-0.8em\TeX}}}
\begin{document}

%%
%% The "title" command has an optional parameter,
%% allowing the author to define a "short title" to be used in page headers.
\title{Criterion-based Heterogeneous Collaborative Filtering for Multi-behavior Implicit Recommendation}

%%
%% The "author" command and its associated commands are used to define
%% the authors and their affiliations.
%% Of note is the shared affiliation of the first two authors, and the
%% "authornote" and "authornotemark" commands
%% used to denote shared contribution to the research.
\author{Xiao Luo}
\orcid{0000-0002-7987-3714}
\authornote{Xiao Luo, Daqing Wu and Yiyang Gu contributed equally to this research. }
\email{xiaoluo@cs.ucla.edu}
\affiliation{%
  \institution{Department of Computer Science, University of California, Los Angeles}
  \country{USA}
  \postcode{90024}
}

\author{Daqing Wu}
\orcid{0000-0002-3354-0973}
\authornotemark[1]
\email{wudq@pku.edu.cn}
\affiliation{%
  \institution{School of Mathematical Sciences, Peking University}
  \city{Beijing}
  \country{China}
  \postcode{100871}
}
\author{Yiyang Gu}
\orcid{0000-0002-5915-4448}
\authornotemark[1]
\email{yiyanggu@pku.edu.cn}
\affiliation{%
  \institution{National Key Laboratory for Multimedia Information Processing, School of Computer Science, Peking University}
  \city{Beijing}
  \country{China}
  \postcode{100871}
}
%\author{Zeyu Ma}
%\affiliation{%
%  \institution{Harbin Institute of Technology}
%  \city{Shenzhen}
%  \country{China}}
%  \postcode{518000}
%\email{zeyu.ma@stu.hit.edu.cn}

\author{Chong Chen}
\orcid{0000-0003-0213-9957}
\affiliation{%
  \institution{Terminus Group}
  \city{Beijing}
  \country{China}}
  \postcode{100027}
\email{cheung1990@126.com}
\authornote{Corresponding authors.}

\author{Luchen Liu}
\email{liuluchen@pku.edu.cn}
\orcid{0009-0009-3817-6267}
\affiliation{%
  \institution{National Key Laboratory for Multimedia Information Processing, School of Computer Science, Peking University}
  \city{Beijing}
  \country{China}
  \postcode{100871}
}
\author{Jinwen Ma}
\orcid{0000-0002-7388-4295}
\email{jwma@math.pku.edu.cn}
\affiliation{%
  \institution{School of Mathematical Sciences, Peking University}
  \city{Beijing}
  \country{China}
  \postcode{100871}
}
\author{Ming Zhang}
\orcid{0000-0002-9809-3430}
\email{mzhang_cs@pku.edu.cn}
\affiliation{%
  \institution{National Key Laboratory for Multimedia Information Processing, School of Computer Science, Peking University}
  \city{Beijing}
  \country{China}
  \postcode{100871}
}
\author{Minghua Deng}
\orcid{0000-0002-9143-1898}
\email{dengmh@pku.edu.cn}
\affiliation{%
  \institution{School of Mathematical Sciences, Peking University}
  \city{Beijing}
  \country{China}
  \postcode{100871}
}
\authornotemark[2]
\author{Jianqiang Huang}
\orcid{0000-0001-5735-2910}
\email{jianqiang.jqh@gmail.com}
\affiliation{%
  \institution{Nanyang Technological University}
  \country{Singapore}
  \postcode{639798}
}

\author{Xian-Sheng Hua}
\orcid{0000-0002-8232-5049}
\email{huaxiansheng@gmail.com}
\affiliation{%
  \institution{Terminus Group}
  \city{Beijing}
  \country{China}
  \postcode{100027}
}

%%
%% By default, the full list of authors will be used in the page
%% headers. Often, this list is too long, and will overlap
%% other information printed in the page headers. This command allows
%% the author to define a more concise list
%% of authors' names for this purpose.
\renewcommand{\shortauthors}{Luo et al.}
\newcommand{\R}[1]{{\color{black}{#1}}}
\newcommand{\RR}[1]{{\color{black}{#1}}}
\def\method{CHCF}
%%
%% The abstract is a short summary of the work to be presented in the
%% article.

\begin{abstract}

Recent years have witnessed the explosive growth of interaction behaviors in multimedia information systems, where multi-behavior recommender systems have received increasing attention by leveraging data from various auxiliary behaviors such as tip and collect.
Among various multi-behavior recommendation methods, non-sampling methods have shown superiority over negative sampling methods. However, two observations are usually ignored in existing state-of-the-art non-sampling methods based on binary regression: (1) users have different preference strengths for different items, so they cannot be measured simply by binary implicit data; (2) the dependency across multiple behaviors varies for different users and items.
To tackle the above issue, we propose a novel non-sampling learning framework named \underline{C}riterion-guided \underline{H}eterogeneous \underline{C}ollaborative \underline{F}iltering (CHCF). CHCF introduces both upper and lower thresholds to indicate selection criteria, which will guide user preference learning. Besides, CHCF integrates criterion learning and user preference learning into a unified framework, which can be trained jointly for the interaction prediction of the target behavior. We further theoretically demonstrate that the optimization of Collaborative Metric Learning can be approximately achieved by the CHCF learning framework in a non-sampling form effectively. Extensive experiments on three real-world datasets show the effectiveness of CHCF in heterogeneous scenarios.

\end{abstract}

%%
%% The code below is generated by the tool at http://dl.acm.org/ccs.cfm.
%% Please copy and paste the code instead of the example below.
%%
\begin{CCSXML}
<ccs2012>
 <concept>
  <concept_id>10010520.10010553.10010562</concept_id>
  <concept_desc>Information systems ~ Recommender systems</concept_desc>
  <concept_significance>500</concept_significance>
 </concept>
 <concept>
  <concept_id>10010520.10010575.10010755</concept_id>
  <concept_desc>Computing methodologies ~ Neural networks</concept_desc>
  <concept_significance>300</concept_significance>
 </concept>
</ccs2012>
\end{CCSXML}

\ccsdesc[500]{Information systems ~ Recommender systems}
\ccsdesc[300]{Computing methodologies ~ Neural networks}
%\ccsdesc{Computer systems organization~Robotics}
%\ccsdesc[100]{Networks~Network reliability}

%%
%% Keywords. The author(s) should pick words that accurately describe
%% the work being presented. Separate the keywords with commas.
\keywords{Collaborative filtering, neural networks, implicit feedback, multi-behavior recommendation}

%%
%% This command processes the author and affiliation and title
%% information and builds the first part of the formatted document.
\maketitle

\section{Introduction}

Recommender systems (RS) have received extensive attention in information systems in recent years \cite{ricci2011introduction,zhou2019ijcai,chen2020learning,wang2022learning,ju2023comprehensive,qin2023disenpoi}. The key point of RS is to depict user-item interaction with the exploration of user preferences and item intrinsic properties~\cite{he2020lightgcn}. As a prevalent and basic procedure for constructing a recommender system, Collaborative filtering (CF)~\cite{hu2008collaborative} presumes that users with comparable behavioral characteristics have similar preferences on various items.
In order to accomplish this assumption, CF learns parameterized user and item representation and estimates user preferences from the historical interaction data. Generally, there are two core components to learn the CF model, i.e., the base component is user and item embedding, and another is interaction modeling where  historical interactions are reconstructed according to the user and item embeddings. Among the extensive CF methods, matrix factorization (MF) \cite{hu2008collaborative, koren2009matrix,liu2021bayesian} is popular and widely used in many applications \cite{liu2017related}. Early works on MF algorithms mostly deal with explicit feedback, in which the ratings of users directly imply the preference on corresponding items \cite{koren2008factorization}. However, explicit ratings are difficult to obtain in some cases. In most cases, our data contains implicit feedback~\cite{yang2017improving}, e.g., collecting and viewing records. Owing to the success of deep learning in vision and language fields \cite{sang2020context}, there are also several attempts which adopt neural networks (NN) in the field of recommender systems \cite{deng2019deepcf,he2020lightgcn,zhang2021hybrid}. These NN-based methods learn low-dimensional user and item embeddings, followed by predicting interaction based on the embedding vectors based on various network architectures.

Despite the predominance of the aforementioned CF models, they merely investigate a single form of user-item interactions when learning user preference representations. Nevertheless, in real-world practice, user-item interactions can be multiplex and characterized by a range of relationships~\cite{gao2019learning,xia2020multiplex}. In E-commerce platforms, for instance, users exhibit different types of behaviors including view, add-to-cart, add-to-favorite and purchase. Due to the idiosyncrasies of individual users, the dependencies across multiple behaviors are various. For instance, some users prefer add-to-cart before purchase, which indicates that add-to-cart behavior probably co-occur along with purchase behavior. But, some users are more likely to purchase items from the add-to-favorite item lists. These anfractuous dependencies across multiple behaviors increase the difficulty of distilling effective collaborative signals from multi-behavior interactions. The early solutions to the above multi-behavior recommendation are on the basis of collective matrix factorization \cite{zhao2015improving, singh2008relational,krohn2012multi}. These methods extend MF to jointly factorize multiple matrices. To tackle the dilemma in modeling the dependencies across multiple behaviors, a handful of studies consider the diverse behavior relationships among multiple types of interactions~\cite{gao2019neural,chen2020efficient}. For example, NMTR~\cite{gao2019neural} presumes that the behavior categories have cascading relationships and transmit interaction information on this basis. EHCF \cite{chen2020efficient} models the relationships among multiple behaviors using a transfer model. As we implied before, the dependencies across multiple behaviors are various from users or items. The above methods do not take the behavior dependencies into consideration or merely account for it in a simple assumption which is independent of the users/items. Recently, Jin et al. \cite{jin2020multi} tackle this problem by constructing a unified graph and modeling the dependencies across multiple behaviors by parameterized user-specific behavior importance weight. It happens that there is a similar case. GHCF \cite{chen2021graph} jointly embeds the representations of users and items with the graph structure, along with relations for multi-behavior prediction and trains the model with non-sampling optimization, achieving state-of-the-art performance. In summary, the methods with non-sampling learning strategies \cite{chen2020efficient,chen2021graph} typically perform better than sampling-based strategies for the multi-behavior recommendation task.

\begin{figure}
\centering
\includegraphics[width=0.75\textwidth,keepaspectratio=true]{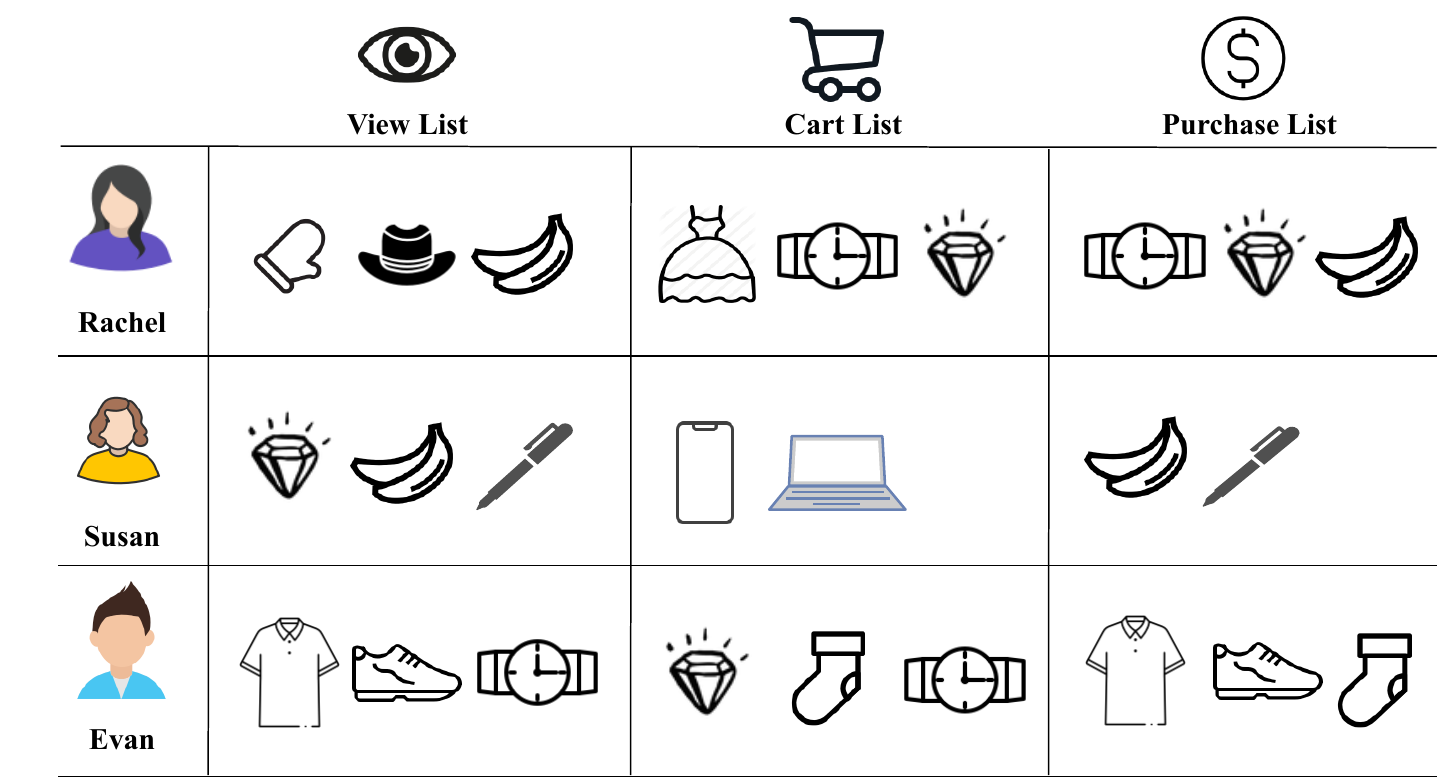}
\caption{The strength of multiple types of behavior is not consistent for different users and items because of user- and item-specific selection criteria. For example, Rachel tends to buy items that occur in her cart list compared with Susan, which may result from different consumption views. Bananas are more likely to be purchased after being viewed than diamonds. The underlying reason is that diamonds belong to luxury goods.}
\label{fig:challenges}
\end{figure}

Although the existing heterogeneous collaborative filtering methods achieve satisfactory results, they still face two key challenges: (1) Different preference strengths cannot be measured by implicit data. Specifically, an observed interaction record (labeled as 1) indicates an interaction between the user and the item, but it doesn't promise that the user really likes it. Similarly, an unobserved interaction record (labeled as 0) doesn't necessarily stand for that the user doesn't like the item; it could be that the user hasn't been aware of this item. Therefore, recent non-sampling strategies \cite{chen2020efficient,chen2021graph} regress each observed/unobserved interaction score to 1/0 under each behavior, which neglects the rating caused by implicit feedback and further hurts diverse user preference learning. 
(2) The dependencies across multiple types of behavior are various for different users and items, since the selection criteria of users for different behaviors are various due to user idiosyncrasy (e.g., consumption ability) and item properties (e.g., prices).
In other words, if the selection criteria of two behaviors are similar (dissimilar) for a user-item pair, they have a close (distant) relation. For example, in Fig. \ref{fig:challenges}, Rachel is more likely to buy items that occur in her cart list than Susan. Thus carting implies a stronger collaborative signal for Rachel rather than Susan. Furthermore, it requires more considerable thought to purchase luxuries (e.g., diamonds) than daily necessities (e.g., bananas), even if the user may often visit luxury stores. Thus the view behaviors indicate a stronger collaborative signal for daily necessities than luxuries. However, most of the recommendation models fail to capture this heterogeneity comprehensively.
For example, MRMN \cite{zhou2019ijcai} adopts the collaborative metric learning (CML) \cite{hsieh2017collaborative} framework, involving a triple loss paradigm with a fixed margin for the different users under each behavior, limiting the expressiveness of the model.

Motivated by these challenges, we propose a general and efficient heterogeneous learning framework, named \underline{C}riterion-guided \underline{H}eterogeneous \underline{C}ollaborative \underline{F}iltering (abbreviated as CHCF) for collaborative filtering model learning (e.g., MF \cite{hu2008collaborative}, GMF \cite{he2017neural} and LightGCN \cite{he2020lightgcn}) for multi-behavior scenarios. To cope with the heterogeneous implicit interaction, we first introduce upper and lower thresholds for different users and items to indicate different selection criteria, featured by the low-rank user heterogeneity matrix and item heterogeneity matrix. Moreover, we integrate criterion learning and user preference learning into a unified framework CHCF, which can be trained jointly for target interaction prediction. {To allow the diversity of preference scores for enhanced generalization, our heterogeneity-based criterion loss incorporates the hinge loss as well as heterogeneous selection criteria.} 
Importantly, we theoretically demonstrate that the loss of CHCF is an upper bound of collaborative metric learning \cite{hsieh2017collaborative} (CML) loss but in an unbiased non-sampling strategy, indicating that the optimization of Collaborative Metric Learning can be approximately achieved by the CHCF learning framework in a non-sampling form effectively. 
To evaluate the performance of our CHCF learning framework, we apply our CHCF to optimize three popular CF models (i.e., MF, GMF, LightGCN) with comprehensive experiments. And the results show our proposed CHCF significantly perform better than the state-of-the-art approaches on the multi-behavior recommendation task. Furthermore, we offer the training efficiency and empirical convergence of our framework, making it more practical and generalized in real-world heterogeneous scenarios.  
To summarize, the contributions of our work are as follows: 
\begin{itemize}
    \item  We propose CHCF, a non-sampling recommendation learning framework that models heterogeneous selection criteria and learns user preference simultaneously. We first introduce multiple user- and item-specific thresholds to feature different criteria for implicit heterogeneous data. Then we integrate the criterion learning and user preference learning into a unified framework, which makes full use of heterogeneous collaborative signals for user preference learning.
    \item We theoretically demonstrate that our CHCF framework can optimize the upper bound of CML in an unbiased non-sampling strategy, implying that the optimization of CML can be approximately achieved by our CHCF learning framework in an efficient non-sampling form. 
    \item To show the effectiveness of CHCF, we implement several variants of CHCF with three CF models to model the interaction. Comprehensive experiments on three real-world datasets demonstrate that our CHCF outperforms the state-of-the-art approaches in multi-behavior scenarios. 
\end{itemize}

\section{Related Work}
\subsection{Collaborative Filtering with Implicit Data}
We begin with reviewing collaborative filtering methods in single-behavior scenarios~\cite{li2022causal,wang2020multi,wang2019dmran,qin2023learning}. Note that the majority of users could not rate items and thus it is sometimes hard to get explicit feedback from real-world data. Therefore, the number of implicit records including viewing, bookmarking, and purchasing items, largely surpasses the number of explicit feedback records including a like and a rating. To this end, it is crucial to design efficient recommendation algorithms that could tackle implicit feedback \cite{oard1998implicit,li2020iptv,li2020viewpoint}. Early efforts include ALS model \cite{hu2008collaborative} which factorizes the binary interaction matrix and assumes that users do not like unobserved items. SLIM~\cite{ning2011slim} learns the item-item similarity with additional constraints and some kernel methods~\cite{polato2018boolean} are developed to increase the expressive ability of RS. By presuming that users prefer the observed items to the unseen ones, the majority of recent methods consider interaction data as positive-only~\cite{rendle2009bpr,he2017neural}. 
Inspired by the process of deep learning, NCF~\cite{he2017neural} substitutes the inner product with a neural network capable of learning an arbitrary function from input, combining Generalized Matrix Factorization (GMF) and Multi-Layer Perceptron (MLP).
{On the basis of MF, ExpoMF~\cite{liang2016modeling} considers all unobserved interactions to be negative and leverages item popularity to weight them.}
Mult-VAE~\cite{liang2018variational} extends variational autoencoders to collaborative filtering for implicit feedback. With the development of graph learning methods \cite{kipf2016semi,velivckovic2017graph}, some graph neural network (GCN) based methods have been proposed to leverage the potential semantics in order to capture the high-order collaborative signals, such as Neural Graph Collaborative Filtering \cite{wang2019ngcf}, LightGCN \cite{he2020lightgcn} and etc. 
Although current collaborative filtering models have achieved great success in single-behavior scenarios, these CF models generally depend on negative sampling strategies for efficient model learning. Actually, our CHCF framework can optimize the above user preference CF models and applies them in heterogeneous scenarios.

\subsection{Multi-behavior Recommendation} 

The multi-behavior-based recommendation seeks to harness various types of interaction behavior to enhance the retrieval performance for the target behavior \cite{singh2008relational, loni2016bayesian}. The early representative approach CMF \cite{singh2008relational} proposes to share item embeddings across behaviors to concurrently factorize different interaction matrices and extends to exploiting different user behaviors in multi-behavior recommendation \cite{zhao2015improving}. Early works include eALS \cite{ ding2019sampler,gao2019learning} which exploits auxiliary interactions with an improved negative sampler. \R{Recently, multi-task learning frameworks have been widely adopted for multi-behavior recommendation. For example, ESM$^2$~\cite{wen2020entire} learns from decomposed pretexts to enhance the target prediction using the multi-task learning framework. DMFP~\cite{wang2019dmfp} utilizes multiple attention mechanisms to extract user preference from complementary perspectives.}
% {NMTR~\cite{gao2019neural} considers the behavior correlations in a cascaded manner and captures multi-type interactions with the multi-task learning framework.} 
Efficient Heterogeneous Collaborative Filtering (EHCF) \cite{chen2020efficient} attempts to capture fine-grained user-item interaction and optimizes the network parameters efficiently from the whole heterogeneous data to further enhance the performance. These methods mostly regard auxiliary user-item behaviors as weaker interactions of user preference learning. Recent works usually tackle this task without any prior information about user preference strengths.
MBGCN \cite{jin2020multi} builds upon a message-passing architecture and designs an item-to-item embedding updating manner for exploring item-to-item similarity. {MATN~\cite{xia2020multiplex} encodes multiple relational structures via exploring the cross-behavior collaborative information as well as within-behavior collaborative signals. MB-GMN~\cite{xia2021graph} combines the exploration of multi-behavior relationships with the meta-learning paradigm.
GNMR~\cite{xia2021multi} explores the relationships among various types of behaviors using the message passing module.
HMG-CR~\cite{yang2021hyper} builds hyper meta-paths along with meta-graphs to explicitly capture the relationships among various behaviors.
KHGT~\cite{xia2021knowledge} constructs a knowledge-aware collaborative graph to achieve high-order relation exploration in the multi-behavior scenarios.} 
Graph Heterogeneous Collaborative Filtering (GHCF) \cite{chen2021graph} learns the representations of users and items, as well as relationships for multi-behavior forecasting and trains the model from the whole heterogeneous data, achieving state-of-the-art performance. \R{Recently, self-supervised techniques~\cite{chen2020simple,wang2021self} have also been extended to solve this problem~\cite{gu2022self,wu2022multi}, which identify the important behaviors as the pretext task.}
As pointed out in the introduction, the existing models do not well consider the key challenges, which are addressed by our neural network-based learning framework CHCF. Actually, our CHCF is a generalized multi-behavior recommendation learning framework, which is capable of optimizing any collaborative filtering model.

\subsection{Model Learning in Recommendation}
As we know, data matrices are highly sparse in many real-world applications. Two types of approaches have been widely utilized in recent studies to optimize the recommender system models as follows: (1) negative sampling approaches \cite{rendle2009bpr, gao2019learning} and (2) non-sampling approaches \cite{xiao2017learning,chen2019efficient,chen2020efficient,chen2020tois,chen2021graph}. 
The negative sampling approaches generate negative examples from  missing data. As a widely-used learning framework, BPR \cite{rendle2009bpr} selects negative examples out of missing entries at random and then maximizes the distance between the prediction of observed interaction and that of sampled negative samples. Unfortunately, existing sampling-based approaches typically converge slower and the design of the sampler is of great concern for the performance. Non-sampling approaches treat all missing interactions as negative. For instance, WMF \cite{hu2008collaborative} treats all missing data as negative examples with labels set to zero, but assigns them a low weight in the regression objective. Traditional non-sampling learning methods are mostly based on Alternating Least Squares (ALS) \cite{hu2008collaborative,he2016fast}. However, these methods cannot be applied to neural network-based methods. \R{Recently, ENMF \cite{chen2020tois} develops fast optimization methods to accumulate learning for neural recommendation models and various multi-behavior methods adopt the binary regression loss in the multi-task learning framework to achieve the state-of-the-art performance~\cite{chen2021graph,chen2020efficient}. However, these methods do not consider the various preference strengths of implicit data and heterogeneous selection criteria for different behaviors. To tackle this, we come up with a novel generalized  framework for learning multi-behavior recommender systems, which learns the selection criteria and user preference automatically. To the best of our knowledge, we are the first to learn CF models with the help of selection criterion learning in heterogeneous scenarios.}

\section{Method}\label{sec:model}
\begin{table}[t]
\centering
\caption{{Summary of symbols and notation.\label{tab:sym}}}
	\begin{tabular}{c|c}
		\hline
		Symbol & Description   \\
		\hline
	    $U$ & set of users \\
	    $V$ & set of items \\
	    $d$ & embedding dimension \\ 
	    $K$ & number of behaviors \\
	    $\bm{R}^{(k)}$ & interaction matrices for behavior $k$ \\
	    $\bm{\hat{R}}$ & shared preference score matrices\\
	    $V_{u}^{(k)+}$& observed item set of user $u$ for behavior $k$\\
	    $V_{u}^{(k)-}$ & unobserved item set of user $u$ for behavior $k$\\
	     $S_{uv}^{(k)}$ & the upper threshold for user u and item v \\
	     $T_{uv}^{(k)}$  & the lower threshold for user u and item v \\
	     $\mathbf{H}$ & user heterogeneity matrix \\
	     $\mathbf{G}$ & item heterogeneity matrix \\
	    $\Theta_{CF}$ & parameters for collaborative filtering model \\
	     $\Theta_{HCL}$ & parameters for heterogeneous criterion learning model \\
		\hline
	
	\end{tabular}
\end{table}
In this part, we begin with formalizing the heterogeneous collaborative filtering problem. Second, we revise the existing non-sampling training strategies. Third, we elaborate on our CHCF to learn a specific CF model. Lastly, we provide some theoretical analysis of our proposed framework.

{Table \ref{tab:sym} summarizes the notations and key concepts.} Assuming a dataset consisting of $|U|$ users and $|V|$ items, the index $u$ and $v$ denote a user and an item respectively. Let $\left\{\bm{R}^{(1)}, \bm{R}^{(2)}, \ldots, \bm{R}^{(K)}\right\}$ represent the interaction matrices for $K$ types of behavior. Each entry in $\bm{R}^{(k)}$ is valued 1 or 0 according to the users' feedback of the $k$-th behavior. In formulation,
\begin{equation}
R_{uv}^{(k)}=\left\{
\begin{array}{ll}
1, & \text {if $u$ has interacted with $v$} \\
0, & \text {otherwise.}
\end{array}\right.
\end{equation}
In our problem, without loss of generality, the $K$-th behavior is regarded as the target behavior, which will be evaluated during inference. The task of heterogeneous collaborative filtering aims to predict whether the user $u$ would have an interaction with the item $v$ on the target behavior. All these items are ranked based on interaction scores, offering a recommendation list of items for every user. It is worth noting that there is no restriction on the temporal order or strength order of different behaviors. To put it another way, behavior $k-1$ does not have to occur before behavior $k$ and $R_{uv}^k = 1 $ does not indicate a stronger or weaker user preference compared with $R_{uv}^{k-1} = 1 $.  

\subsection{Preliminary: Non-Sampling Training Strategy  }
Most of the multi-behavior recommendations \cite{chen2020efficient, chen2021graph,chen2020tois} involve in two parts. The first part is the collaborative filtering model producing the likelihood for each type of behavior. The second part is the learning framework to optimize the collaborative filtering model. 
Specifically, after estimating the likelihood of $u$ conducting the k-th behavior on $v$ with $\hat{R}_{uv}^k$, recent state-of-the-art models generally utilize a non-sampling strategy by adopting a uniform weighted regression with squared loss \cite{hu2008collaborative}, and the loss of the $k$-th behavior for collaborative filtering (CF) learning is formulated as:
\begin{equation}
    \mathcal{L}_{CF}^{(k)}(\Theta_{CF}) = \sum_{u\in{U}}\bigg(\sum_{v\in{V}_{u}^{(k)+}} \Big(1-\hat{R}_{uv}^{(k)}\Big)^2 + w\sum_{v^{'}\in{V}_{u}^{(k)-}}{\hat{R}_{uv^{'}}^{(k)^2}}\bigg)
\label{equ:regress_loss}
\end{equation}
in which $\Theta_{CF}$ represents the parameters of the CF model, $V_{u}^{(k)+}$ ($V_{u}^{(k)-}$) is the set of positive (negative) items for user $u$ on the $k$-th behavior, $w$ denotes the weight of negative entry. 

However, for E-commerce implicit data, $R_{uv}^{(k)}=1$ implies that the $k$-th interaction exists between user $u$ and item $v$, but it does not imply that $u$ really likes $v$. Moreover, $R_{uv}^{(k)}=0$ does not necessarily imply that $u$ dislikes $v$, since perhaps $u$ has not been aware of $v$. In other words, $R_{uv}^{(k)}=R_{uv'}^{(k)}=1/0$ may imply different preference strength for items $v$ and $v'$. As a result, the above non-sampling learning strategy which learns parameters to regress the likelihood of each item to 1/0 is hard to distinguish true/false preferential items under $k$-th behavior from the implicit data.
What's more, for different types of positive (negative) interactions $R_{uv}^{(k)}$ and $R_{uv}^{(k')}$, it regresses their likelihood to the same value 1(0). In this way,
it also does not explore the heterogeneous information for different users and items, which greatly restricts the performance of multi-behavior recommendation.

\begin{figure}
\centering
\includegraphics[width=12cm,keepaspectratio=true]{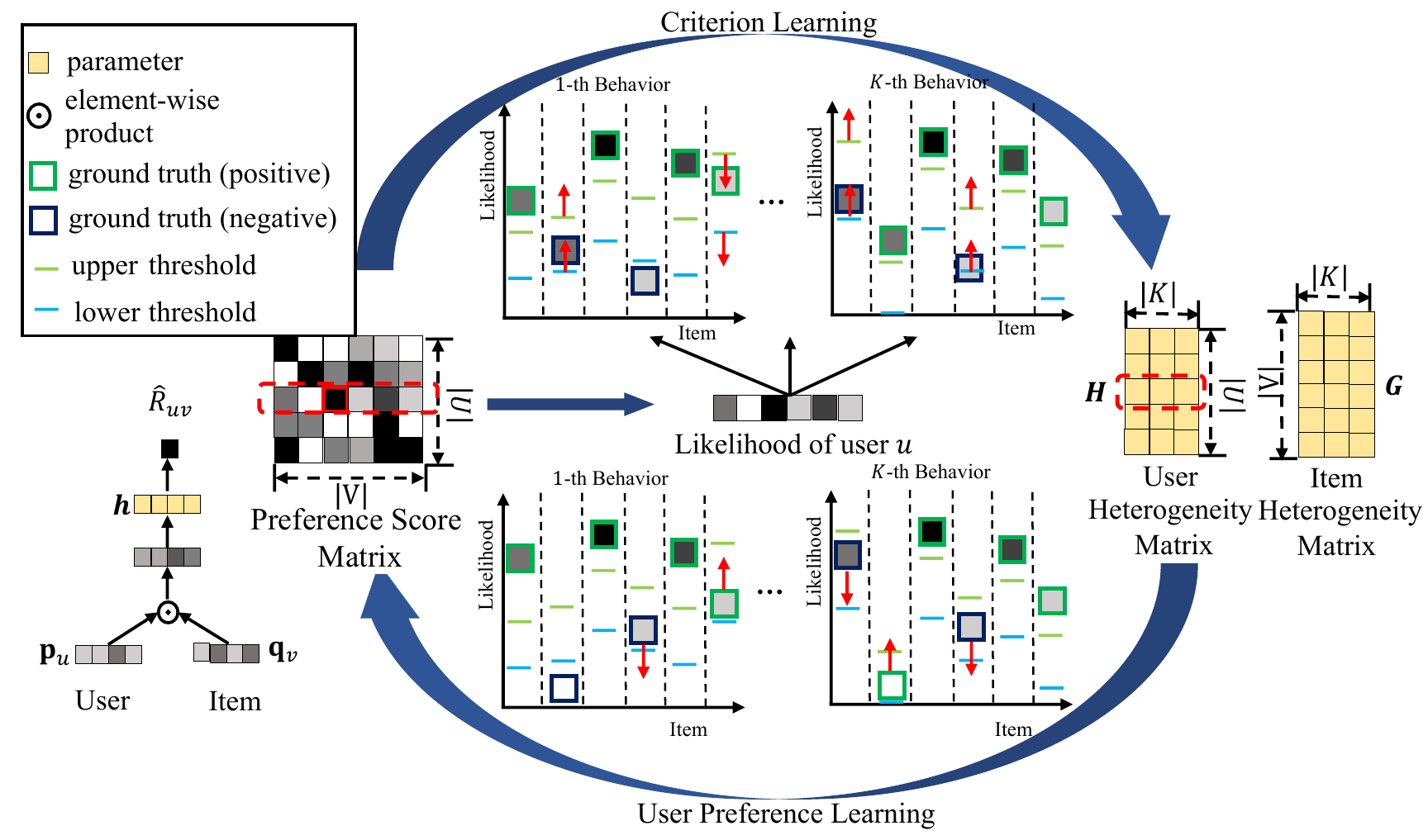}
\caption{The Architecture of GMF-CHCF. \R{We first utilize the collaborative model GMF (left) to generate a preference score for each user-item pair. The thresholds for each behavior are derived from the user and item heterogeneity matrices (right). Our \method{} compares the preference scores and thresholds based on observations in the loss objective.}  
The small red arrows represent the learning process. Given user u and item, GMF outputs the {preference score} $\hat{R}_{uv}$ on the left. The upper threshold $S_{uv}^{(k)}$ and lower threshold $T_{uv}^{(k)}$ for the $k$-th behavior are obtained by user heterogeneity matrix $\mathbf{H}$ and item heterogeneity matrix $\mathbf{G}$ on the right. The loss function punishes the situation when the {preference score} is less than upper threshold $S_{uv}^{(k)}$ for the positive interactions (i.e., $\hat{R}_{uv}<S_{uv}^{(k)}$) and the {preference score} is larger than lower threshold $T_{uv'}^{(k)}$ for the negative interactions (i.e., $\hat{R}_{uv'}>T_{uv'}^{(k)}$). We minimize the loss by jointly updating the parameters of GMF as well as $\mathbf{H}$ and $\mathbf{G}$.}

\label{fig:framwork}
\end{figure}
\subsection{Modeling Heterogeneous Selection Criteria}\label{ref:3-B}
To improve the non-sampling training strategy, we need to regress each positive entry to a specific target value. Nevertheless, from the perspective of ranking, if a positive interaction gets a higher {preference score} than the target value, it still suffers the regression loss, which is unreasonable. To design a more reasonable solution, hinge loss shall be considered to punish the wrong decision domains. The critical point of the hinge loss is called the decision bound, and 1 (positive) and 0 (negative) are the trivial bounds in Equation \ref{equ:regress_loss}. 
Moreover, we propose user- and item-specific decision bound, modeling the selection criteria for multiple types of behavior, which can reflect dependencies across behaviors. In other words, behavior dependencies are correlated with users' and items' specialties. For example, for each user $u$, if the relation of two types of behavior is close, one behavior's decision bound is similar to the other's. In reverse, for each user $u$, two similar obtained decision bounds mean that there are two similar behavior records, which further implies that the relation of two types of behavior is close. 

As such, we integrate user-and item-specific behavior dependency into threshold learning and propose a novel criterion loss. As shown in Fig. \ref{fig:framwork}, in our framework {all the behaviors share the same preference but have different selection criteria. By comparing the preference with different selection criteria, people behave differently with different probabilities}. To show how CHCF works, we optimize a specific preference learning model Generalized Matrix Factorization (GMF) \cite{he2017neural} with CHCF (denoted as GMF-CHCF). Firstly, users and items are mapped into the embedding space and GMF fuses user and item embeddings. In formulation, 
\begin{equation}
\phi\left(\mathbf{p}_{u}, \mathbf{q}_{v}\right)=\mathbf{p}_{u} \odot \mathbf{q}_{v},
\end{equation}
where $\mathbf{p}_{u} \in \mathbb{R}^d$ and $ \mathbf{q}_{v}\in \mathbb{R}^d$ are hidden embedding of $u$ and $v$, d is the embedding size, and $\odot$ represents the element-wise product of vectors. Lastly, GMF predicts users' preference through projecting the vector to the predicting layer. 
\begin{equation}
 \hat{R}_{u v} =\mathbf{h}^{T}\left(\mathbf{p}_{u} \odot \mathbf{q}_{v}\right) \\ =\sum_{i=1}^{d} h_{i} p_{u, i} q_{v, i} ,
\end{equation}
where $\mathbf{h} \in \mathbb{R}^d$ denotes the predict layer.

After getting the user-item preference scores through GMF, the heterogeneity-based criterion loss (HCL) of the $k$-th behavior is formulated as:
\begin{equation}
\begin{split}
\mathcal{L}_{HCL}^{(k)}(\Theta_{HCL}^{(k)}) = \sum_{u\in{U}}\bigg(\sum_{v\in{V}_{u}^{(k)+}} g\Big((S_{uv}^{(k)}-\hat{R}_{uv})_+\Big) \\+ w\sum_{v^{'}\in{V}_{u}^{(k)-}} g\Big((\hat{R}_{uv^{'}}-T_{uv}^{(k)})_+\Big)\bigg)
\label{equ:criterion_loss_user_item}
\end{split}   
\end{equation}
where $\Theta_{HCL}^{(k)}=\{S_{uv}^{(k)},T_{uv}^{(k)}\}_{u\in U}, v \in V$, $(\cdot)_+$ denotes $\max(\cdot,0)$. Here we use upper threshold $S_{uv}^{(k)}$ and lower threshold $T_{uv}^{(k)}$ as decision bound instead of 1 and 0 in Equation \ref{equ:regress_loss} to model heterogeneous selection criteria. Equation \ref{equ:criterion_loss_user_item} aims to punish the situation when the likelihood is less than upper threshold $S_{uv}^{(k)}$ for the positive interactions (i.e., $\hat{R}_{uv}<S_{uv}^{(k)}$) and the likelihood is larger than lower threshold $T_{uv'}^{(k)}$ for the negative interactions (i.e., $\hat{R}_{uv'}>T_{uv'}^{(k)}$). From the statistical view, for each user $u$, $[S_{uv}^{(k)}, +\infty)$ and $(-\infty, T_{uv'}^{(k)}]$ respectively denote the user-adaptive confidence interval of positive interaction $(u, v)$ and negative interaction $(u, v^{'})$ for the $k$-th behavior. Here $g(\cdot)$ is an arbitrary non-negative monotonous function (e.g., $g(x)=x$ or $x^{2}$) for decorating the distance between the scores and thresholds, which will be analyzed in the end of this section.

Note that $\mathbf{S}^{(k)}\in \mathbb{R}^{|U|\times |V|}$ is the upper threshold matrix and $\mathbf{T}^{(k)}\in \mathbb{R}^{|U|\times |V|}$ is the lower threshold matrix of the $k$-th behavior. For the user $u$, its selection criterion for the item $v$ depends on user idiosyncrasy and item properties. To model this essence and reduce threshold parameters $(|U|\times|V|\times{K}\times2)$, we introduce user heterogeneity matrix $\mathbf{H}\in\mathbb{R}^{|U|\times K}$ and item heterogeneity matrix $\mathbf{G}\in\mathbb{R}^{|V|\times K}$ to approximate threshold matrices: 
\begin{equation}
    S_{uv}^{(k)} = H_{uk}G_{vk},\quad T_{uv}^{(k)} = \alpha S_{uv}^{(k)}
\label{equ:decomposition}
\end{equation}
in which $0 \leq \alpha \leq 1$ is a hyper-parameter to control the ratio of two thresholds. Note that the user heterogeneity matrix $\mathbf{H}$ corresponds to user idiosyncrasy and the item heterogeneity matrix $\mathbf{G}$ corresponds to item properties as discussed in the introduction. 
Now, we revise Equation \ref{equ:criterion_loss_user_item} as follows:
\begin{equation}
\begin{split}
\mathcal{L}_{HCL}^{(k)} (\Theta_{HCL}^{(k)})= \sum_{u\in{U}}\bigg(\sum_{v\in{V}_{u}^{(k)+}} g\Big((H_{uk}G_{vk}-\hat{R}_{uv})_+\Big) \\
+ w\sum_{v^{'}\in{V}_{u}^{(k)-}} g\Big((\hat{R}_{uv^{'}}-\alpha H_{uk}G_{v^{'}k})_+\Big)\bigg)
\label{equ:criterion_loss}
\end{split}
\end{equation}
where $\Theta_{HCL}^{(k)}=\{H_{uk}, G_{vk}\}_{u\in U, v\in V}$. \RR{The primary objective of assuming that each upper threshold is proportional to the lower threshold is to decrease the number of parameters, which also contributes to effective loss optimization. In particular, negative interactions tend to reduce the thresholds, while positive interactions raise the upper thresholds. Our design can balance these tendencies during optimization, effectively learning the appropriate user and item heterogeneity matrices. Without this constraint, the upper bounds would continue to increase and lower bounds would continue to decrease to minimize the loss objective, ultimately leading to a trivial solution.}

\subsection{Criterion-guided Heterogeneous Collaborative Filtering}
Similarly, when the threshold matrices are fixed, $\mathcal{L}_{HCL}^{(k)}$ can be used to guide heterogeneous collaborative filtering model learning (HCF) for user preference learning. In formulation, 
\begin{equation}
\begin{split}
\mathcal{L}_{HCF}^{(k)} (\Theta_{CF})= \sum_{u\in{U}}\bigg(\sum_{v\in{V}_{u}^{(k)+}} g\Big((H_{uk}G_{vk}-\hat{R}_{uv}(\Theta_{CF}))_+\Big) \\
+ w\sum_{v^{'}\in{V}_{u}^{(k)-}} g\Big((\hat{R}_{uv^{'}}(\Theta_{CF})-\alpha H_{uk}G_{v^{'}k})_+\Big)\bigg)
\label{equ:CHCF_CF}
\end{split}
\end{equation}
where $\Theta_{CF}$ is the parameters of the optimized CF model (i.e., GMF).
Lastly, we integrate criterion learning and user preference learning into a unified framework. In formulation, 
\begin{equation}
\begin{split}
\mathcal{L}_{C H C F}^{(k)}(\Theta_{C F}, \Theta_{H C L}^{(k)})=  \sum_{u\in{U}}\bigg(\sum_{v\in{V}_{u}^{(k)+}} g\Big((H_{uk}G_{vk}-\hat{R}_{uv})_+\Big) \\
+ w\sum_{v^{'}\in{V}_{u}^{(k)-}} g\Big((\hat{R}_{uv^{'}}-\alpha H_{uk}G_{v^{'}k})_+\Big)\bigg)
\label{equ:CHCF}
\end{split}
\end{equation}
 The only difference between Equation \ref{equ:CHCF} and Equation \ref{equ:criterion_loss} is that $\Theta_{CF}$ is flexible or not.

According to the paradigm of Multi-Task Learning (MTL) \cite{argyriou2007multi} which optimizes the models of several related tasks in a joint manner, the ultimate objective function of the CHCF framework is written as:

\begin{equation}
\min_{\Theta_{HCF},\Theta_{HCL}}\mathcal{L}_{CHCF}(\Theta_{CF}, \Theta_{HCL}) = \sum_{k=1}^{K} \lambda_{k} \mathcal{L}_{CHCF}^{(k)}
\label{equ:loss_finall}
\end{equation}
where $\Theta_{HCL}=\{\mathbf{H}, \mathbf{G}\}$. Here the term $\lambda_k$ aims to adjust the effect of the $k$-th type of behavior on MTL, since the significance of each behavior type could be diverse for the problems in different domains. \R{We optimize $\Theta_{CF}$ and $\Theta_{HCL}$ in a parallel fashion. This joint optimization in related problems has been validated for convergence empirically~\cite{li2020symmetric, ma2020probabilistic,argo}. In formulation,
\begin{equation}\label{eq:11}
\Theta_{CF}^{t+1}=\Theta_{CF}^{t}-lr * \frac{\partial \mathcal{L}_{CHCF}(\Theta_{CF}^t, \Theta_{HCL}^t)}{\partial \Theta_{CF}^t}
\end{equation}
\begin{equation}\label{eq:12}
\Theta_{HCL}^{t+1}=\Theta_{HCL}^{t}-lr * \frac{\partial \mathcal{L}_{CHCF}(\Theta_{CF}^t, \Theta_{HCL}^t)}{\partial \Theta_{HCL}^{t}}
\end{equation}
where $t$ stands for the $t$-th iteration, and $lr$ denotes the learning rate. We also validate the empirical convergences of our optimization strategy in Sec. \ref{sec:eff}.}
Furthermore, we normalize all user embeddings and item embeddings to be constrained within the Euclidean ball, mitigating the challenge of `curse of dimensionality' \cite{li2020symmetric,park2018collaborative,tay2018latent}. \R{If there are definite cascading relationships between behaviors, we can add the constraint that the upper threshold for the latter behavior is always above that of the former behavior. In formulation, we have $S_{uv}^{(k+1)} \geq S_{uv}^{(k)}$ where behavior $k$ always happens before $k+1$. To achieve this, we enforce both $\bm{H}$ and $\bm{G}$ to satisfy $\bm{H}_{uv}^{(k+1)}\geq \bm{H}_{uv}^{(k)}$ and $\bm{G}_{uv}^{(k+1)}\geq \bm{G}_{uv}^{(k)}$ by adding a non-negative matrix on $\bm{H}_{uv}^{(k)}$ ($\bm{G}_{uv}^{(k)}$) to obtain $\bm{H}_{uv}^{(k+1)}$ ($\bm{H}_{uv}^{(k)}$).}

We additionally enforce that $\sum_{k=1}^K \lambda_k =1$ for convenience. {As for inference, we need to compare the user's preference score $\hat{R}_{uv}$ with both the personalized upper threshold and the lower threshold for the target behavior. Note that the upper threshold is proportional to the lower threshold and thus the results are the same for both thresholds. Hence, the final target prediction score is calculated by $\hat{R}_{uv}/(H_{uK}G_{vK})$ in our CHCF framework.}
The whole learning procedure is summarized in Algorithm \ref{alg1}, respectively.

\begin{algorithm}
\caption{CHCF's Training Algorithm}
\label{alg1}
\begin{algorithmic}[1]
\REQUIRE Users $U$ and items $V$;\\
\quad \ \ Interaction data $\left\{\mathbf{R}^{(1)}, \mathbf{R}^{(2)}, \ldots, \mathbf{R}^{(K)}\right\}$;\\
\quad \ \ threshold ratio $\alpha$;\\
\quad \ \ Weight of negative entry $w$;\\
\ENSURE Parameters $\Theta_{CF}$ and $\Theta_{HCL}=\{\mathbf{H},\mathbf{G}\}$;
\STATE Initialize parameters $\Theta_{CF}$ and $\Theta_{HCL}$;
\REPEAT
\STATE Sample $B$ users from $U$ to construct a mini-batch;
\STATE Generate the likelihood matrix ($B\times |V|$) by forward-propagating through the given collaborative filtering model;
\STATE Calculate the loss by Equation \ref{equ:loss_finall};
\STATE {Update parameters $\Theta_{CF}$ by Equation \ref{eq:11}};
\STATE {Update parameters $\Theta_{HCL}$ by Equation \ref{eq:12}};
\UNTIL convergence
\end{algorithmic}
\end{algorithm}

\subsection{Theoretical Analysis on CHCF}
In this section, we theoretically demonstrate that CHCF is closely related to CML \cite{hsieh2017collaborative}. Here we consider the single behavior situation.
We first review the framework of CML, whose loss with adaptive margin \cite{li2020symmetric} is formulated as:
\begin{equation}
\mathcal{L}_{CML}=\sum_{u \in U}\sum_{v \in V_{u}^{+}}\sum_{v^{'} \in V_{u}^{-}} g\bigg(\Big({d(u, v)}^2-{d(u, v^{'})}^2+m_u\Big)_{+}\bigg)
\label{equ:cml_1}
\end{equation}
Obviously, $1-\hat{R}_{uv}$ is a reasonable description of distance metric ${d(u,v)}^2$. We first define a decorated function class.
\begin{definition}
$g(x)$ is called a low-order decorated function if $g(x)$ satisfies 
\begin{itemize}
    \item $g(0)=0$
    \item $g(x)$ is monotonically increasing.
    \item There exists a constant $M$, s.t. $\sup_{x>0} g(2x)< Mg(x)$
\end{itemize}
\end{definition}
Note that polynomials with $g(0)=0$ ($g(x)= x,x^2,x^3,\dots$), $g(x)= log(x+1)$ both belong to this category while $g(x)= e^x-1$ breaks the last condition. Moreover, this class is closed under addition and multiplication. We have the following theorem.
\begin{theorem}
If $g(x)$ is a low-order decorated function, CHCF is an upper bound of CML with whole-data strategies multiplied by a constant $C$. In formation,
\begin{equation}
\mathcal{L}_{CML} \leq C \cdot \mathcal{L}_{CHCF}
\end{equation}

\end{theorem}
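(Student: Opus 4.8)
The plan is to prove the bound summand-by-summand, converting each term of $\mathcal{L}_{CML}$ into the two kinds of CHCF penalties and then reassembling the sums. First I would invoke the stated identification $d(u,v)^2 = 1-\hat{R}_{uv}$ (writing the single-behavior bounds as $S_{uv}=H_uG_v$ and $T_{uv}=\alpha H_uG_v$), so that the argument of the CML loss becomes
\[
d(u,v)^2-d(u,v')^2+m_u=\hat{R}_{uv'}-\hat{R}_{uv}+m_u .
\]
The goal is then to dominate each such term, after applying $(\cdot)_+$ and $g$, by the positive-item penalty $(H_uG_v-\hat{R}_{uv})_+$ and the negative-item penalty $(\hat{R}_{uv'}-\alpha H_uG_{v'})_+$ appearing in $\mathcal{L}_{CHCF}$.

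The central algebraic step is the decomposition
\[
\hat{R}_{uv'}-\hat{R}_{uv}+m_u=(H_uG_v-\hat{R}_{uv})+(\hat{R}_{uv'}-\alpha H_uG_{v'})+\big(m_u-(H_uG_v-\alpha H_uG_{v'})\big).
\]
Identifying the adaptive margin with the criterion-induced gap, i.e. taking $m_u\le H_uG_v-\alpha H_uG_{v'}$, makes the last bracket nonpositive, so it may be dropped. Using $(a+b)_+\le a_++b_+$ together with the monotonicity of $g$ gives
\[
g\big((\hat{R}_{uv'}-\hat{R}_{uv}+m_u)_+\big)\le g\big((H_uG_v-\hat{R}_{uv})_++(\hat{R}_{uv'}-\alpha H_uG_{v'})_+\big).
\]

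Next I would linearize the right-hand side using the low-order decorated-function hypotheses. Writing $A=(H_uG_v-\hat{R}_{uv})_+\ge 0$ and $B=(\hat{R}_{uv'}-\alpha H_uG_{v'})_+\ge 0$, monotonicity gives $g(A+B)\le g(2\max(A,B))$, the doubling condition gives $g(2\max(A,B))\le M\,g(\max(A,B))$, and since $g\ge 0$ with $g(0)=0$ we have $g(\max(A,B))\le g(A)+g(B)$. Hence each CML summand is at most $M\big(g(A)+g(B)\big)$, which is exactly $M$ times the sum of a positive-item and a negative-item CHCF penalty; note this uses all three defining properties of a low-order decorated function.

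Finally I would reassemble the triple sum. Since $g(A)$ does not depend on $v'$ and $g(B)$ does not depend on $v$, summing over $u$, $v\in V_u^+$, $v'\in V_u^-$ produces counting factors $|V_u^-|$ and $|V_u^+|$ respectively; bounding these by $N^-=\max_u|V_u^-|$ and $N^+=\max_u|V_u^+|$ and comparing against the $(1,w)$ weighting of the two sums in $\mathcal{L}_{CHCF}$ yields the claim with $C=M\max(N^-,\,N^+/w)$. The main obstacle is the treatment of the adaptive margin: the decomposition only gives an upper bound when the residual $m_u-(H_uG_v-\alpha H_uG_{v'})$ is nonpositive, so the argument hinges on correctly identifying CML's per-user margin with the learned upper and lower criterion bounds $S_{uv}$ and $T_{uv'}$ — this is the step that must be justified carefully, whereas the subadditivity-up-to-constant and the counting are routine once the decorated-function properties are in hand.
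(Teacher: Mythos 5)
Your proof is correct and takes essentially the same route as the paper: identify CML's adaptive margin with the gap between the criterion bounds ($m_u = S_{uv}-T_{uv'}$ in the paper, $m_u \le S_{uv}-T_{uv'}$ in yours), dominate $g$ of the hinged sum by the two CHCF penalty terms using monotonicity together with the doubling condition (your chain $g(A+B)\le g(2\max(A,B))\le M g(\max(A,B))\le M(g(A)+g(B))$ is the paper's $g((a+b)_+)\le g(2b_+)\le Mg(b_+)\le M(g(a_+)+g(b_+))$ in a different order), and then count multiplicities in the triple sum. The only difference is bookkeeping: the paper fixes $w=|V_u^+|/|V_u^-|$ and writes $C=M|V_u^-|$ (loosely, since that factor depends on $u$), whereas your $C=M\max\left(N^-,\,N^+/w\right)$ with $N^{\pm}=\max_u|V_u^{\pm}|$ handles arbitrary $w$ more carefully --- a minor tightening, not a different argument.
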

\begin{proof}
If we consider $m_u=S_{uv}-T_{uv}$, Equation \ref{equ:cml_1} is revised as 
\begin{equation}
\mathcal{L}_{CML}=\sum_{u \in U}\sum_{v \in V_{u}^{+}}\sum_{v^{'} \in V_{u}^{-}} g\Big((S_{uv}-\hat{R}_{uv}+\hat{R}_{uv^{'}}-T_{uv^{'}})_{+}\Big)
\label{equ:cml_2}
\end{equation}
Without loss of generality, we assume $a \leq b$, then 
\begin{equation*}
g((a+b)_+) \leq  g(2b_+) \leq  M g(b_+) \leq  M(g(a+) + g(b+)) 
\end{equation*}
Therefore, if we set $ w = |V_{u}^{+}|/|V_{u}^{-}|$, the following
\begin{equation}
\begin{split}
\mathcal{L}_{CML} \leq & M  \sum_{u \in U}\sum_{v \in V_{u}^{+}}\sum_{v^{'} \in V_{u}^{-}}g\Big((S_{uv}-\hat{R}_{uv})_{+}\Big)\\ &+g\Big((\hat{R}_{uv^{'}}-T_{uv^{'}})_{+}\Big) \\
\leq & M |V_{u}^{-}|\sum_{u\in{U}}\bigg(\sum_{v\in{V}_{u}^{(k)+}} g\Big((S_{uv}-\hat{R}_{uv})_+\Big) \\&+ \frac{|V_{u}^{+}|}{|V_{u}^{-}|}\sum_{v^{'}\in{V}_{u}^{(k)-}} g\Big((\hat{R}_{uv^{'}}-T_{uv^{'}})_+\Big)\bigg)\\
= & C \mathcal{L}_{CHCF} 
\end{split}
\label{equ:final}
\end{equation}
holds. 
\end{proof}

From Theorem 1, when taking all triplets into consideration, the upper bound of the CML loss is proportional to the CHCF loss. As a result, the minimization of CML can be approximately achieved by the optimization of CHCF \cite{yu2020sampler}.  
 
\R{\noindent\textbf{Efficiency of CHCF.} We take $g(x)=x$ as an example and other decorated functions lead to similar results. In Equation \ref{equ:CHCF}, updating all the users for each behavior takes $O(|U||V|(d+1))$ time.
For CML loss, it takes $O(2\sum_u|V_u^+||V_u^-|d)$ time. Since $2|V_u^+||V_u^-| >> |V|)$, the computational complexity of our approach is decreased by several magnitudes. Actually, in practice, CML always accompanies a biased negative sampling strategy, which hinders from the optimal ranking performance even though extensive updating steps have been conducted \cite{xin2018batch}. Besides, CHCF shares the same preference scores for different behaviors, which results that updating the whole model through MTL for all the $K$ types of behavior takes $O(2\sum_u|V_u^+||V_u^-|(d+K))$ time. Since $d>>K$ in most practical cases, our model is more efficient than most of the existing multi-behavior state-of-the-arts (e.g., NMTR~\cite{gao2019neural}, GHCF~\cite{chen2021graph}), whose computational complexity is proportional to the number of behaviors $K$. 
Above all, our CHCF optimizes the model with an efficient whole-data strategy. 
}

\noindent\textbf{Choices of $g(x)$.} We suggest using a strictly convex decorated function, which has a monotonically increasing gradient. In formulation, \begin{equation}
    x_1 > x_2 \geq 0 \Rightarrow g'(x_1) > g'(x_2)
\end{equation}
This implies that if the interaction score breaks the criterion seriously (i.e., $S_{uv} >> \hat{R}_{uv}$ or $\hat{R}_{uv'} >> T_{uv'}$ in Equation \ref{equ:criterion_loss_user_item} ), we give this term a larger gradient. In this way, CHCF gives more emphasis to 'bad' interactions, which helps the optimization process and further increases the performance. Similar results can be obtained by CML. In our implementation, we use $g(x) = x^2 $ as default for fair comparison with regression loss in \cite{chen2020efficient,chen2021graph}. 

\subsection{Implementation}
 To optimize the objective function, we use mini-batch Adagrad \cite{duchi2011adaptive} as the optimizer. Note that in the methods, we first choose GMF-CHCF as an example to illustrate our CHCF learning framework. Actually, our framework can optimize arbitrary user preference CF models. We also optimize two extra models Matrix Factorization and LightGCN in the experiments. MF is the simplest but effective CF model, which predicts user preference by the dot product of latent embeddings, i.e., $\hat{R}_{uv} = \mathbf{p}_u \odot \mathbf{q}_v$. MF-CHCF can show the ability of CHCF for optimizing simple models. LightGCN is a neural model which achieves state-of-the-art performance using GCN. LightGCN-CHCF can show the performance and empirical convergence of CHCF for optimizing complex models. In practice, we can select different CF models learned by our CHCF framework, and choose the best based on validation. 

\section{Experiments}\label{sec:exp}
In this part, we conduct extensive experiments on three datasets to validate the effectiveness of our CHCF. Here, we will answer seven research questions as follows:
\begin{itemize}
\item  \textbf{RQ1}: In comparison to state-of-the-art method, does our CHCF obtain better performance?
\item  \textbf{RQ2}: How is the scalability of the CHCF framework?
\item  \textbf{RQ3}: How is the effectiveness of our proposed techniques (hinge loss design, user heterogeneity and item heterogeneity modeling) and different types of auxiliary behaviors?
\item  \textbf{RQ4}: Can our model ease the sparse data issue, that users have few records on the target behaviors?
\item  \textbf{RQ5}: How do the model hyper-parameters (bound ratio $\alpha$, negative entry weight $w$ and loss coefficient of behavior $\lambda_k$) affect the final performance of CHCF?
\item  {\textbf{RQ6}: How does our model perform in the cold-start setting?}
\item  \textbf{RQ7}: What behavior-relation patterns does our model capture after training?
\end{itemize}

\subsection{Experimental Settings}

\subsubsection{Datasets}
We perform extensive experiments on popular real-world benchmark which contain multiple types of behaviors. The details of these datasets are presented as below:
\begin{itemize}
    \item \textbf{BeiBei}\footnote{\url{https://github.com/chenchongthu/EHCF}}. It is a dataset collected from Beibei, which is a large website for purchasing infant products in China. It contains 21716 users and 7977 items with different types of behaviors, i.e., purchase, cart and view.
    \item \textbf{Taobao}\footnote{\url{https://tianchi.aliyun.com/dataset/dataDetail?dataId=649}}. It is a dataset collected from Taobao, which is the largest Chinese e-commerce platform. It contains 48749 users and 39493 items encompassing different types of behaviors.
    \item {\textbf{Tmall}\footnote{\url{https://tianchi.aliyun.com/dataset/dataDetail?dataId=47}}. It is also a dataset from the Chinese e-commerce platform. It includes four categories of behavior, i.e., page view, cart, mark-as-favorite and purchase.}
\end{itemize}
\begin{table}[t]
\caption{{Statistical details of the evaluation datasets.}}
\setlength\tabcolsep{2pt}
\begin{tabular}{lrrrr}
\textbf{Dataset} & \textbf{\#User} & \textbf{\#Item} & \textbf{\#Interaction} & \textbf{\#Interactive Behavior Type} \\
\hline
\emph{\textbf{Beibei}} & 21,716 & 7,977 &$3.4\times 10^6$  & $\{$Page View, Cart, Purchase$\}$ \\ 
\emph{\textbf{Taobao}} & 48,749 & 39,493 &$2.0\times 10^6$ &  $\{$Page View, Cart, Purchase$\}$ \\ 
\emph{\textbf{Tmall}} & 9,368 & 302,722 & $1.6\times 10^6$ & $\{$Page View, Favorite, Cart, Purchase$\}$ \\ 
\hline
\end{tabular}
\label{tab1}
\end{table}

Following \cite{chen2020efficient,chen2021graph,jin2020multi,gao2019neural}, we combine the duplicated user-item interactions and further filter out users and subsequently, users and items with fewer than five purchase interactions are excluded. The final purchase records of users are chosen as test data, while the second-last records are allocated for validation. The remaining ones are utilized to form the training set. We summarize the statistical details of three datasets in Table \ref{tab1}.

\subsubsection{Baselines}
We compare our CHCF with a wide range of state-of-the-art approaches. They can be categorized into two types, i.e., one-behavior approaches which merely leverage target behavior records, and multi-behavior approaches which leverage all types of behaviors. As for these baseline models, we consult their corresponding papers and follow their procedures to tune the parameters.
One-behavior models include: 
\begin{itemize}
	\item \textbf{BPR} \cite{rendle2009bpr}, it trains MF using the Bayesian Personalized Ranking (BPR) loss function. 
	\item \textbf{ExpoMF} \cite{liang2016modeling}, it is a whole-data-based MF approach for item recommendation, which considers all missing data as negative with popularity-based weights.
	\item \textbf{NCF} \cite{he2017neural}, it is a popular deep learning approach that implements MF using a Multi-Layer Perceptron (MLP).
	\item \textbf{ENMF} \cite{chen2020efficient}, it is a non-sampling learning recommendation framework which optimizes GMF model with an efficient non-sampling strategy for Top-N recommendation.
	\item \textbf{LightGCN} \cite{he2020lightgcn}, it is the state-of-the-art GNN-based recommender system approach which modifies the design of GNN from NGCF \cite{wang2019ngcf} for a more concise and appropriate architecture.
\end{itemize}
Multi-behavior models include:
\begin{itemize}
\item \textbf{CMF} \cite{zhao2015improving}, it decomposes the interaction matrices of different behavior types at the same time to combine the multiple factorization procedures via sharing the representations of common entities.
\item \textbf{MC-BPR} \cite{loni2016bayesian}, it extends the origin BPR approach to fit heterogeneous scenarios.
\item \textbf{NMTR} \cite{gao2019neural}, it considers the behavior correlations in a cascaded manner and captures multi-type interactions with a multi-task learning framework.
\item \textbf{EHCF} \cite{chen2020efficient}, it explores fine-grained relationships and efficiently optimizes the network with non-sampling strategies. EHCF is a heterogeneous version of ENMF, which is also based on GMF. 
\item \textbf{MBGCN} \cite{jin2020multi}, it considers modeling the strengths of different behaviors and exploits the data with a graph convolutional network.
\item \textbf{MATN} \cite{xia2020multiplex}, it effectively explores multi-behavior relational structures via maintaining the collaborative signals across different types of behaviors.
\item \textbf{GHCF} \cite{chen2021graph}, it takes the advantages of GNNs to learn the embedding of users, items and relations for comprehensive multi-behaviour recommendation.
\item \textbf{GNMR}~\cite{xia2021multi}, it explores the relationships among various types of behaviors using the message passing module.
\item \textbf{HMG-CR}~\cite{yang2021hyper}, it builds hyper meta-paths along with meta-graphs to explicitly capture the relationships among various behaviors.
\item \textbf{MB-GMN}~\cite{xia2021graph}, it combines the exploration of multi-behavior relationships with the meta-learning paradigm.
\end{itemize}

\begin{table*}[!ht]
\centering
\caption{\R{Performance of different models and the improvement rate compared with the best baseline on Beibei and Taobao.} }\label{result}
\begin{tabular*}{0.99\textwidth}{@{\extracolsep{\fill}}lcccccc@{}}
\toprule
\multirow{2}{*}{\textbf{Beibei}} & \multicolumn{3}{c}{HR} & \multicolumn{3}{c}{NDCG} \\
\cline{2-4}\cline{5-7} 
 & \textbf{10} & \textbf{50} & \textbf{100}  & \textbf{10}& \textbf{50 } & \textbf{100}  \\
 \hline
\textbf {BPR} &0.0437 & 0.1246 & 0.2192  & 0.0213 & 0.0407 & 0.0539  \\ 
\textbf{ExpoMF} & 0.0452 & 0.1465 & 0.2246 & 0.0227 & 0.0426 & 0.0553 \\
 \textbf{NCF} & 0.0441 & 0.1562 & 0.2343 & 0.0225 & 0.0445 & 0.0584 \\ 
 \textbf{ENMF} & 0.0464 & 0.1637 & 0.2586 & 0.0247 & 0.0484&  0.0639 \\
 \textbf{LightGCN} & 0.0451 & 0.1613& 0.2495 & 0.0232 & 0.0466 & 0.0611 \\
\midrule
\textbf{CMF}  & 0.0420 & 0.1582 & 0.2843 & 0.0251 & 0.0462 & 0.0661 \\
\textbf{MC-BPR} & 0.0504 & 0.1143 & 0.2755 & 0.0540 & 0.0503 & 0.0653 \\
\textbf{NMTR} & 0.0524 & 0.2047 & 0.3189  & 0.0285 & 0.0609 & 0.0764 \\
\textbf{MBGCN} &  {0.1564} &  {0.3434} & 0.4262 &  {0.0828} &  {0.1282} &  {0.1384} \\
\textbf{EHCF} & 0.1523 & 0.3316 &  {0.4312}  & 0.0817 & 0.1213 & 0.1374  \\
\textbf{GHCF} & 0.1922 & 0.3794 & 0.4711 & 0.1012 & 0.1426 & 0.1575  \\
\hline
\textbf{MF-CHCF} & \textbf{0.2053}& \textbf{0.4300}  & \textbf{0.5421} & \textbf{0.1031}  & \textbf{0.1568} &\textbf{0.1746} \\
Improvement & 6.82\% & 13.34\% & 15.07\% & 1.88\% & 9.96\% &10.86\% \\
\textbf{GMF-CHCF} & \textbf{0.2308}& \textbf{0.4807}  & \textbf{0.5731} & \textbf{0.1189}  & \textbf{0.1740} &\textbf{0.1890}\\
Improvement & 20.08\% & 26.70\% & 21.65\% & 17.49\% & 22.02\% &20.00\% \\
\textbf{LightGCN-CHCF} & \textbf{0.2436}& \textbf{0.4324}  & \textbf{0.5490}  & \textbf{0.1340}  & \textbf{0.1755} &\textbf{0.1896} \\
Improvement & 26.74\% & 13.97\% & 16.54\% & 32.41\% & 23.07\% &20.38\% \\
\bottomrule
\multirow{2}{*}{\textbf{Taobao}} & \multicolumn{3}{c}{HR} & \multicolumn{3}{c}{NDCG} \\
\cline{2-4}\cline{5-7} 
 & \textbf{10} & \textbf{50} & \textbf{100}  & \textbf{10}& \textbf{50 } & \textbf{100}  \\
 \hline
 \textbf{BPR} & 0.0376 & 0.0708 & 0.0871 & 0.0227 & 0.0269 & 0.0305  \\ 
\textbf{ExpoMF} & 0.0386 & 0.0013 & 0.0911 & 0.0238 & 0.0220 & 0.0302 \\
\textbf{NCF} & 0.0391 & 0.0728 & 0.0897  & 0.0233 & 0.0281 & 0.0321 \\ 
 \textbf{ENMF} & 0.0398 & 0.0743 & 0.0936 & 0.0244 &0.0298 & 0.0339\\
 \textbf{LightGCN} & 0.0415 & 0.0814 & 0.1025  & 0.0237 & 0.0325 & 0.0359 \\
\hline
\textbf{CMF}  & 0.0483 & 0.0774 & 0.1185  & 0.0252 & 0.0293 & 0.0357  \\
\textbf{MCBPR} & 0.0547 & 0.0091 & 0.1264 & 0.0223 & 0.0227 & 0.0366  \\
\textbf{NMTR} & 0.0585 & 0.0942 & 0.1368 & 0.0278 & 0.0334 & 0.0394  \\
\textbf{MBGCN} & 0.0701 & 0.1522  & 0.2169   & 0.0390 & 0.0571 & 0.0653  \\
\textbf{EHCF} &  {0.0717} &  {0.1618} &  {0.2211}  &  {0.0403} &  {0.0594} &  {0.0690}  \\
\textbf{GHCF} & 0.0807 & 0.1892 & 0.2599 & 0.0442 & 0.0678 & 0.0792 \\
\hline
\textbf{MF-CHCF} & \textbf{0.1465}& \textbf{0.2416}  & \textbf{0.2829}  & \textbf{0.0891}  & \textbf{0.1102} &\textbf{0.1169} \\
Improvement & 81.54\% & 27.70\% & 8.85\% & 101.58\% & 62.54\% &47.60\% \\
\textbf{GMF-CHCF} & \textbf{0.1514} & \textbf{0.2460}  & \textbf{0.2843}  & \textbf{0.0916}  & \textbf{0.1117} &\textbf{0.1189}  \\
Improvement & 87.61\% & 30.02\% & 9.39\% & 107.23\% & 64.75\% &50.13\% \\
\textbf{LightGCN-CHCF} & \textbf{0.1571}& \textbf{0.2480}  & \textbf{0.2882}  & \textbf{0.0924}  & \textbf{0.1127} &\textbf{0.1192} \\
Improvement & 94.67\% & 31.08\% & 10.89\% & 109.05\% & 66.22\% &50.51\% \\

\bottomrule
\end{tabular*}
\label{tab:result}
\end{table*}

\subsubsection{Evaluation Metrics}
After training a model, we evaluate the retrieval performance on the target behavior~\cite{gao2019neural,chen2020efficient,chen2021graph} by generating the personalized ranking list. Here we rank all items that are not interacted with every given user in the dataset. We utilize the leave-one-out evaluation strategy as in \cite{gao2019neural,chen2020efficient,chen2021graph} and two widely-used metrics, HR@$N$\cite{karypis2001evaluation} and NDCG@$N$ \cite{jarvelin2000ir} are adopted to investigate the ranking performance. 
For the first two datasets, all the unobserved items in the dataset are ranked by our evaluation protocol as in \cite{chen2020efficient,chen2021graph}.
Under this protocol, the results can be more persuasive compared with merely ranking a random subset of negative items \cite{gao2019neural,chen2021graph, krichene2020sampled}. 
For the last large-scale dataset, \R{each positive item is paired with the same $99$ non-interactive items for each user randomly selected by existing works~\cite{xia2021multi,xia2021graph,yang2021hyper} for fair comparison.}
Besides, we vary the $N$ value when evaluating the retrieval performance. For every approach, the model is initialized randomly and executed five times and then the average performance is reported. \RR{Note that $N$ is from references~\cite{chen2020efficient,chen2021graph,yang2021hyper} to enable clear comparisons within reasonable ranges. } 

\subsubsection{Parameters Settings}
Our CHCF\footnote{Our source code is available at \url{http://github.com/luoxiao12/CHCF}.} framework is implemented with Pytorch. All hyper-parameters are tuned in the selected validation set. The parameters of baselines are initialized following their relevant papers, and are then elaborately tuned to obtain optimal performances. By default, our CHCF is configured with the embedding size $64$ and $16$ on the first two datasets and the last dataset, respectively advised by \cite{chen2021graph,yang2021hyper}.
The batch size is searched in $[128, 256, 512, 1024]$. The learning rate is tuned among $[0.005, 0.01, 0.02, 0.05]$. We set the negative sampling ratio as $4$ as in \cite{chen2020efficient,chen2021graph,gao2019learning} for sampling-based methods. For non-sampling methods, we set the weight of missing interactions as $w$, with the candidate set $[0.001, 0.01, 0.1, 0.2, 0.5, 1]$. For the coefficient parameter $\lambda_k$ in the final loss function, we tune them in $[0,1/6,2/6,3/6,4/6,5/6,1]$. As the only newly-introduced hyper-parameter, threshold ratio $\alpha$ is tuned among $[0,0.1,0.3,0.5,0.7,0.9,1]$.  

We set the batch size to $512$ and the learning rate to $0.05$ after the tuning procedure. $w$ is set to $0.1$ for Beibei and $0.01$ for the other two datasets. For Beibei and Taobao, the coefficient parameter $\lambda_k$ is set to $1/6$, $4/6$ and $1/6$ for view, cart and purchase behaviors, respectively. For Tmall, the coefficient parameter $\lambda_k$ is set to $1/6$, $1/6$, $1/2$ and $1/6$ for view, favorite, cart and purchase behaviors, respectively. The introduced hyper-parameter, threshold ratio $\alpha$ is set to $0.5$ by validation. The effects of these hyper-parameters will be further investigated in Section \ref{sec:hyper}.

\subsection{Performance Comparison (RQ1)}

We first compare our framework with the state-of-the-art methods by investigating the Top-N performance with N set to $[10, 50, 100]$ following \cite{chen2021graph}. We show the results of our CHCF based on three single-label recommender methods: MF (MF-CHCF), GMF (GMF-CHCF), and LightGCN (LightGCN-CHCF). The compared results of different methods on Beibei and Taobao are shown in Table \ref{tab:result} and we can make the following observation from the results:
1) \textit{Multi-behavior models work well than one-behavior models mostly.} According to the comparison of single-behavior approaches and multi-behavior approaches, we can find that introducing multi-behavior information to predictions can improve performance. In our baselines, the best multi-behavior approach, i.e., GHCF, is able to perform better than the best one-behavior approach, i.e., NCF, on Beibei by 82.1\% on HR@100 and 146.5\% on NDCG@100, while 153.6\% and 110.6\% on Taobao, validating the effectiveness of integrating multi-behavior signals into the approach.
2) \textit{Models adpoting whole-data-based optimization strategies (EHCF, GHCF and CHCF) outperform other sampling-based models (NMTR and MBGCN) generally.} {
For instance, NMTR and GMF-CHCF both optimize the GMF framework and NMTR achieves much worse performance using the negative sampling strategy. This clarifies the advantages of whole-data-based learning strategies since the sampling-based methods could fail to capture enough collaborative filtering signals and bring in biased results in multi-behavior scenarios.}
3) \textit{Our CHCF significantly performs better than other competitive approaches in terms of all evaluated metrics.}
For example, compared with GHCF, a recently developed and very powerful GCN model, our GMF-CHCF shows an improvement of 16.8\% and 17.0\% for HR@100 and NDCG@100 on the Beibei dataset and 8.85\% and 50.1\% on Taobao dataset respectively, which justifies the superiority of our model. Note that our GMF-CHCF utilizes the shallow and effective GMF model with much fewer parameters. Furthermore, compared with EHCF which has the same network structure but uses a different optimization framework, our GMF-CHCF outperforms it by 32.9\% and 28.6\% for HR@100 on the two datasets, indicating the potential of enhancing classic shallow approaches with a better optimization framework. 
4) \textit{The performance of CHCF is related to the basic CF models.} In comparison with three CF models optimized by the CHCF framework, we find that the performance in multi-behavior scenarios seems to be influenced by the choice of different basic CF models. One piece of evidence is that MF-CHCF achieve the  worst performance among the three CHCF models, which could result from the poor capacity of MF in learning collaborative filtering signals. Among three CHCF methods, LightGCN-CHCF achieves the best performance in most cases owing to both LightGCN's ability to capture collaborative filtering signals and CHCF's efficient learning framework. 

\begin{table*}[!ht]
\setlength{\abovecaptionskip}{0.2cm}
\centering
\caption{\R{Performance of different models and the improvement rate compared with the best baseline on Tmall.} }\label{result:2}
\begin{tabular}{lcccc}
\toprule
\textbf{Method}
 & \textbf{HR@5} & \textbf{HR@10} & \textbf{NDCG@5} & \textbf{NDCG@10}\\
 \hline
 
\textbf{NMTR} & 0.2780 & 0.4230 & 0.1798 & 0.2265  \\
\textbf{EHCF} & 0.3327 & 0.4209 & 0.2447 & 0.2732  \\
\textbf{GHCF} & 0.3588 & 0.4562 & 0.2889 & 0.3119  \\
\textbf{MB-GMN} & 0.3384 & 0.4508 & 0.2399 & 0.2824  \\
\textbf{HMG-CR} & 0.3547 & 0.4313 & 0.2642 & 0.2891 \\
\textbf{GNMR} & 0.3654 & 0.4415 & 0.2686 & 0.3029  \\
\hline
\textbf{MF-CHCF} & \textbf{0.3776}& \textbf{0.4629}  & \textbf{0.2986} & \textbf{0.3166} \\
Improvement & 3.34\% & 1.47\% & 3.36\% & 1.51\% \\
\textbf{GMF-CHCF} & \textbf{0.3818} & \textbf{0.4690} & \textbf{0.3039} & \textbf{0.3288} \\
Improvement & 4.49\% & 2.81\% & 5.19\% & 5.42\% \\
\textbf{LightGCN-CHCF} & \textbf{0.3996}& \textbf{0.5195}  & \textbf{0.2914} & \textbf{0.3301}\\
Improvement & 9.36\% & 13.88\% & 0.87\% & 5.83\% \\
\bottomrule
\end{tabular}

\label{tab:result3}
\end{table*}

\begin{figure*}[t]
    \centering
    \includegraphics[width=0.9\textwidth]{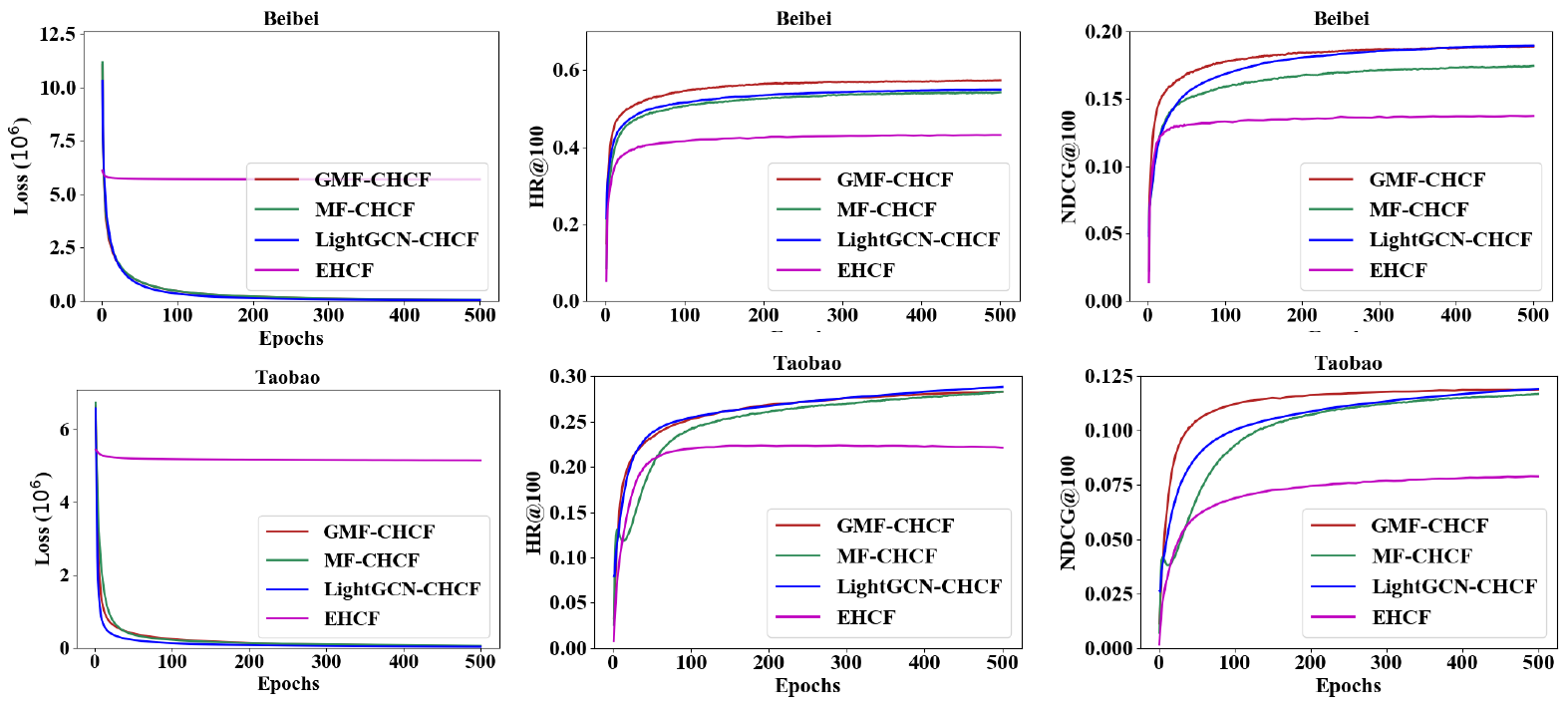}
    \caption{Training loss, HR@100 and NDCG@100 of EHCF and our three CHCF methods in each iteration on Beibei and Taobao.} 
    \label{fig:loss}
\end{figure*}

Moreover, we conduct experiments on Tmall by comparing our CHCF with recent state-of-the-art methods including NMTR, EHCF, GHCF, MB-GMN, HMG-CR and GNMR. We adopt the metrics as in \cite{yang2021hyper} and the compared results are shown in Table \ref{tab:result3}. From the results, we can find that our CHCF still outperforms recent state-of-the-art methods when it comes to a huge number of items, which indicates the superiority of our CHCF.

\subsection{Efficiency Analyses of CHCF (RQ2)}\label{sec:eff}

In this part, we evaluate the computing time of the proposed CHCF in comparison to three state-of-the-art MTL framework-based methods, NMTR, EHCF and GHCF. All these approaches are trained on a single NVIDIA GeForce GTX TITAN X GPU for a fair comparison of the efficiency.

\begin{table}[t]
\caption{Computational time cost investigation (second/minute [s/m]). Here "E" and "T" represent the training time for each epoch and the total training time, respectively. \R{We can observe the competitive model scalability of our CHCF.}}
\begin{tabular}{lrrrr}
\hline
\textbf{Models} & \textbf{Beibei/E} & \textbf{Beibei/T} & \textbf{Taobao/E} & \textbf{Taobao/T } \\
\hline
NMTR & 165s & 550m & 180s & 600m \\ 
EHCF & 7s & 24m & 16s & 54m \\ 
GHCF & 13($\pm$1)s & 45m & 34($\pm$1)s & 115m\\
GMF-CHCF & 7($\pm$1)s & 24m & 20($\pm$1)s & 70m \\
MF-CHCF & 7($\pm$1)s & 23m & 20($\pm$1)s & 68m \\
LightGCN-CHCF & 8($\pm$1)s & 28m & 22($\pm$1)s & 75m \\
\hline
\end{tabular}
\label{tab:eff}
\end{table}

 In Table \ref{tab:eff}, we show the running time of every epoch and total training during the training phase of each compared approach. It is observed that CHCF could achieve better performance than NMTR. 
 The reason is that our model integrates multiple types of behaviors into low-rank matrix learning, and shares the same interaction scores for different behaviors, which greatly increases the computational efficiency. GHCF improves the performance with its complex GCN architecture, with the cost of low efficiency. Although we lose in the cases when compared with efficient EHCF, we think that the speed-up of EHCF is based on its simple regression loss, which is proved hard to distinguish different preference strengths in implicit data. Above all, our CHCF shows a competitive model scalability because of the comparable computational complexity and great performance.

 Furthermore, we study the learning process of the efficient method EHCF and our three CHCF methods. Fig. \ref{fig:loss} shows the state of every approach with embedding size $64$ on two datasets. Note that besides training loss, we merely report the performance in terms of HR@100 and NDCG@100. For other metrics, we have a similar observation. From the figure, although the loss of EHCF converges faster, it cannot well fit the multi-behavior scenario and leads to low performance. Compared with EHCF, the initial training losses of CHCF models are huge but they decrease quickly and converge to small values. Moreover, our framework CHCF consistently achieves better performance in terms of HR and NDCG, which further demonstrates the superiority of our framework. Lastly, we find that complex models (e.g., LightGCN) do not influence the convergence of CHCF. In other words, our CHCF framework is a generalized learning framework, which promises empirical convergence.

\subsection{Ablation Study (RQ3)}
Since three CHCF models have similar performance generally, we utilize GMF-CHCF as an example in the following analysis\footnote{In the rest of this paper, we omit “GMF-” for concise representation except as otherwise noted.}. For other models, we have a similar observation.

\subsubsection{Comparisons of Different Decorated Functions}

\begin{figure}
    \centering
    \includegraphics[width=0.7\textwidth]{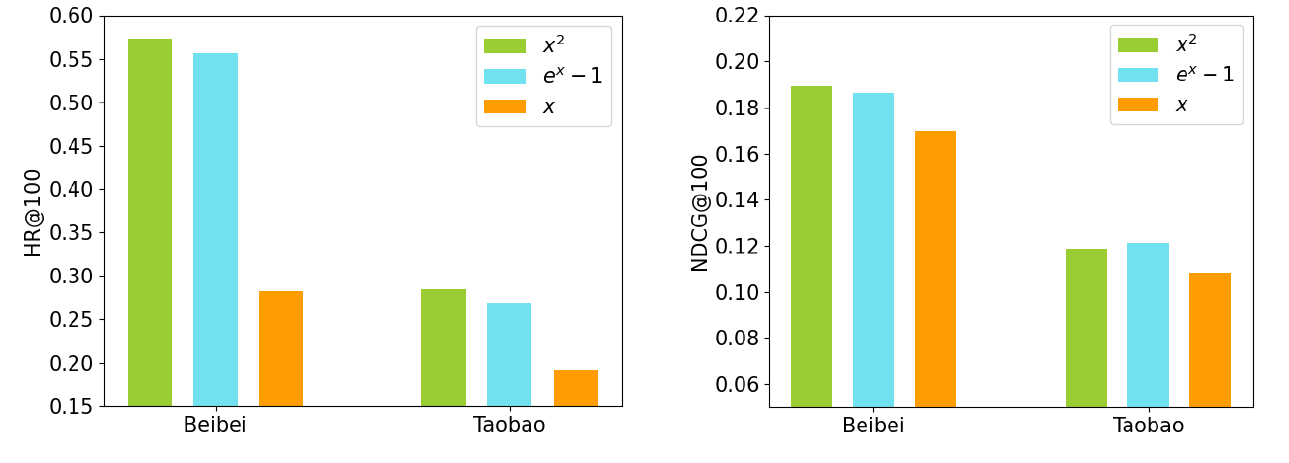}
    \caption{The performance of CHCF with different decorated functions in terms of HR@100 and NDCG@100.} 
    \label{fig:g}
\end{figure}

We first compare multiple decorated functions $g(x) = x , g(x)= x^2, g(x)= e^x -1$ in Equation \ref{equ:CHCF}. Here we append one of the three decorated functions to assess which one can achieve better performance. From Fig. \ref{fig:g}, we find that compared to the other methods, the last two always achieve better results. As we know, $g(x)=x^2$ and $g(x)= e^x -1$ are convex, which leads to a larger gradient in SGD optimization process if the interaction score breaks the criterion seriously. This is consistent with our analysis in the last section. {Furthermore, although $g(x)= e^x -1 $ is not a low-order decorated function, it still has a similar performance to $g(x)= x^2 $ in our implementation. Perhaps the reason is that when the preference score is bounded mostly in practice, $g(x)= e^x -1 $ can still meet the condition of the low-order decorated function.}
For computational efficiency, we set the decorated function to $g(x)= x^2$ as default in this paper. 

% Furthermore, $g(x)= e^x -1 $ is able to achieve similar performance

% Furthermore, $g(x)= x^2 $ outperforms $g(x)= e^x -1 $ slightly. The reason may be that $g(x)= e^x -1 $ over-magnifies the gradients for some interaction scores. 
\begin{table*}[!h]
\centering
\caption{Performance of variants of CHCF on Taobao.}

\begin{tabular*}{0.99\textwidth}{@{\extracolsep{\fill}}lcccccc@{}}
\toprule

\multirow{2}{*}{\textbf{Taobao}} & \multicolumn{3}{c}{HR} & \multicolumn{3}{c}{NDCG} \\
\cline{2-4}\cline{5-7} 
 & \textbf{10} & \textbf{50} & \textbf{100} & \textbf{10}& \textbf{50 } & \textbf{100}  \\
 \hline
 \textbf{CHCF-O}& 0.0633 & 0.1112 & 0.1525  & 0.0312 & 0.0389 & 0.0457 \\
 
 \textbf{CHCF-I}& 0.1232 & 0.2254 & 0.2687 & 0.0694 & 0.0919 & 0.0985 \\
\textbf{CHCF-U} & 0.1270&0.2294& 0.2769&0.0705&0.0932&0.1009 \\

\textbf{CHCF-H} & 0.0681& 0.1591 & 0.2151  & 0.0486&0.0653&0.0743\\

% \textbf{CHCF-A} & 0.1181 & 0.2069 &0.2478  & 0.0706 & 0.0902 & 0.0968 
\hline
\textbf{CHCF-V} & 0.1476 & 0.2110 & 0.2281  & 0.0896 & 0.1016 & 0.1044 \\
\textbf{CHCF-C} & 0.0840 & 0.1444 & 0.1703  & 0.0504 & 0.0637 & 0.0679 \\

\hline
\textbf{CHCF} & \textbf{0.1514} & \textbf{0.2460}  & \textbf{0.2843}  & \textbf{0.0916}  & \textbf{0.1117} &\textbf{0.1189}   \\

\bottomrule
 
\end{tabular*}
\label{tab:ablation_result}
\end{table*}

\subsubsection{Model Ablation and Data Ablation}

Furthermore, we conduct ablation experiments to understand the effectiveness of each of our designs and auxiliary behavior. In particular, we introduce the following variants: 1) CHCF-O. It replaces our loss function with the regression loss in Equation \ref{equ:regress_loss}. For positive interactions, we regress their likelihood to 1 and for negative interactions, we regress their likelihood to 0. With GMF as the basic model, we use different prediction layers ($h_k$) of each behavior, which are randomly initialized and independent.
2) CHCF-H. It replaces the hinge loss with weighted regression loss in CHCF. 
3) CHCF-U. The threshold matrices $S$ and $T$ are only featured by the item heterogeneity matrix $\mathbf{H}$ and all the users share the same threshold for each item-behavior pair.
4) CHCF-I. The threshold matrices $S$ and $T$ are only featured by the user heterogeneity matrix $\mathbf{G}$ and all the items share the same threshold for each user-behavior pair.
5) CHCF-V. It deletes the records of view behavior and merely utilizes the cart records as auxiliary information. 
6) CHCF-C. It deletes the records of cart behavior and merely utilizes the view records as auxiliary information.
% 7) {CHCF-A. It uses adaptive ratio $\alpha_{uv}$ for each user-item pair}. 

We show the performance on dataset Taobao in Table \ref{tab:ablation_result} and the performance on dataset Beibei are similar. Then, we have the following observations:
\R{
\begin{itemize}
    \item CHCF-O performs much worse than both CHCF-H and CHCF, which implies that modeling heterogeneous selection criteria instead of binary regression is able to improve the retrieval performance significantly. It further demonstrates the rationality of our adaptive threshold strategy. 
    \item It can be found that CHCF outperforms CHCF-H greatly. Therefore, our weighted hinge loss can capture different inherent preference strengths, which is capable of enhancing the performance evidently. This clarifies the rationality of hinge loss design in modeling multiple implicit interactions compared to the regression loss. 
    \item CHCF is consistently superior to CHCF-I and CHCF-U, which hence illustrates the importance of modeling selection criteria heterogeneity from both user's and item's views. 
    \item Our full model consistently outperforms CHCF-V and CHCF-C in all settings, demonstrating that our approach enhances purchase prediction via fusing multiple relationships with the MTL framework. Moreover, CHCF-V performs much better than CHCF-C, which indicates that cart behavior is more important for the prediction of purchase interactions. This is in accordance with our intuition. 

\end{itemize}
}

\begin{figure}[!h]
    \centering
    \includegraphics[width=0.7\textwidth]{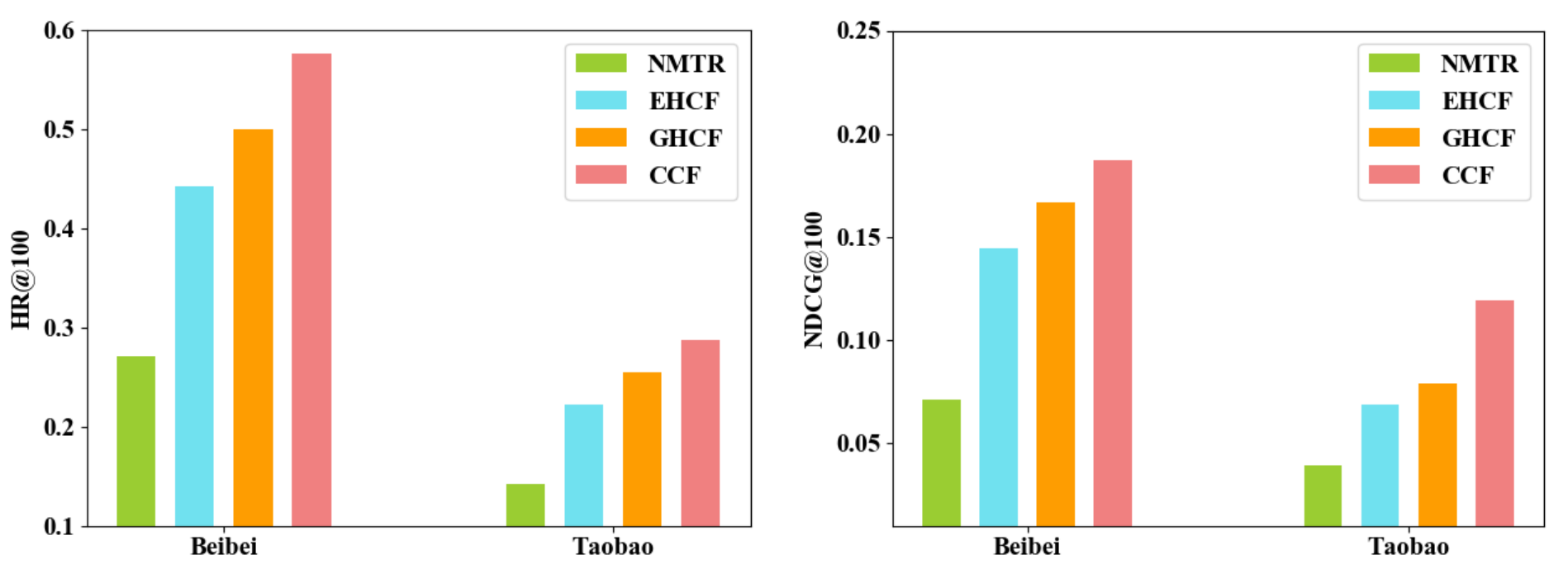}
    \caption{Performance Comparison of Subset with 5$\thicksim$8 Purchase Records.} 
    \label{fig:aul}
\end{figure}
\subsection{Effectiveness Analysis on Sparse Data (RQ4)}

Data sparsity is a significant difficulty for implicit feedback-based recommender systems, and the multi-behavior recommendation is a common solution to this difficulty. In this part, we investigate how the proposed CHCF mitigates the issue when the users have very few records of the target behavior. In particular, we select users having 5$\thicksim$8 purchase records, and 6056 and 11846 users are involved in Beibei and Taobao, respectively. We perform experiments by comparing our CHCF with three baselines NMTR, EHCF and GHCF. 
The results are plotted in Fig. \ref{fig:aul} and we can find that when the user target data grows sparser, our model still keeps a good HR@100 and NDCG@100 performance on Beibei and Taobao, which performs better than the best baseline over 10.0\% consistently. Since CHCF makes use of multiple behavior relationships in a proper manner, it can achieve remarkable performance with sparse interactions of users. Finally, we conclude that the proposed CHCF framework solves the data sparsity problem efficiently to some degree.

\begin{figure}[!h]
\centering
\includegraphics[width=0.7\textwidth]{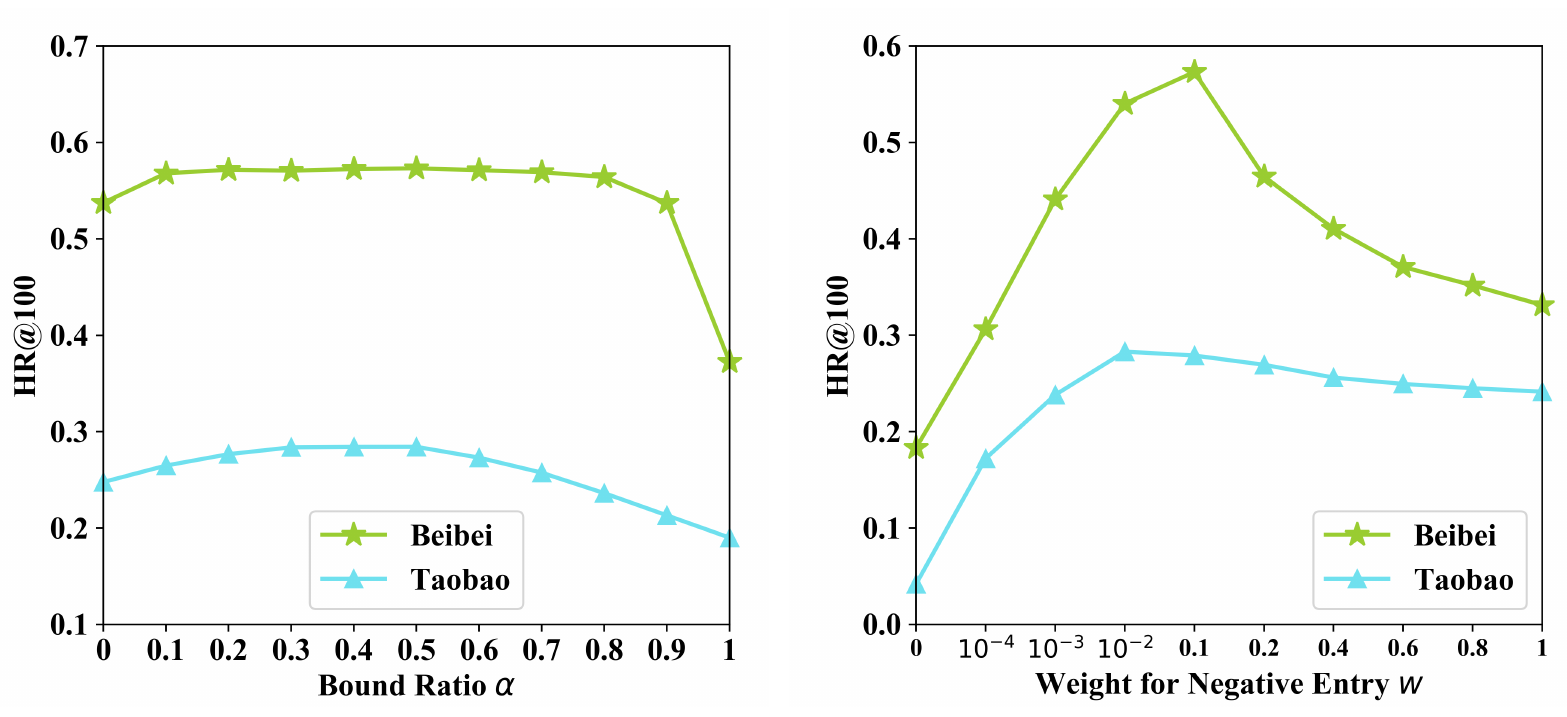}
\caption{Hyper-parameter study of threshold ratio $\alpha$ and weight for negative entry $w$ in terms of HR@100. 
\label{fig:sen}}
\end{figure}
\begin{figure}[!h]
    \centering
    \includegraphics[width=0.7\textwidth]{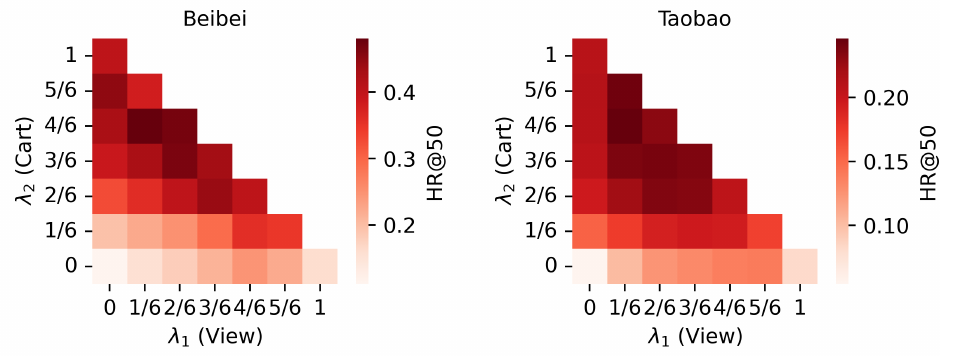}
    \caption{Performance of CHCF on Beibei and Taobao w.r.t. different $\lambda_1$ and $\lambda_2$.} 
    \label{fig:loss_coefficient}
\end{figure}
\subsection{Parameter Sensitivity (RQ5)}\label{sec:hyper} 
In this part, we study the impact of different values of the threshold ratio $\alpha$, different weights for negative entry $w$ and MTL coefficient parameter $\lambda_k$ on our developed CHCF framework. CHCF only introduces a new hyper-parameter threshold ratio $\alpha$, which implies that our model is easy to be tuned compared with other mainstream MTL-based heterogeneous models (i.e., NMTR, EHCF, GHCF). Furthermore, the weight for negative entry $w$ and MTL coefficient parameter $\lambda_k$ are also key hyper-parameters in the whole-data-based learning framework. Thus, we also study their influences on the performance. For each hyper-parameter, we study its influence with other hyper-parameters fixed as default. 

\subsubsection{Impact of Threshold Ratio $\alpha$ } The evaluation on HR@100 of the first two hyper-parameters is shown in Fig. \ref{fig:sen}. For other metrics, the observations are similar. We can observe that the performance is stable when $\alpha$ is around $0.5$ and a too high or too low $\alpha$ may hurt the model performance. The potential reason is that a high $\alpha$ makes thresholds too close to guide user preference learning while a low $\alpha$ leads to learning improper thresholds. 

\subsubsection{Impact of Weight for Negative Entry $w$} In Fig. \ref{fig:sen}, it can be seen that as the weight for negative entry rises from zero to one, the performance of CHCF first increases and then drops. The best results can be obtained at $w=0.1/0.01$ for Beibei/Taobao, demonstrating that the best $w$ is distinct for different datasets. The reason could be that sparseness can be various in various datasets. In conclusion, we need to search for the appropriate $w$ in $[0,1]$ on validation data for each dataset in practice.

\subsubsection{Impact of Weight for MTL Coefficient Parameter $\lambda_k$ } There are three behavior types in both E-commerce datasets (i.e., view, cart and purchase), implying three loss coefficients $\lambda_1$, $\lambda_2$ and $\lambda_3$. Because the sum of three coefficients is 1, if the first two are given, the third one is determined. We vary the first two coefficients in $[0, 1/6, 2/6, 3/6, 4/6, 5/6, 1]$ and the results of HR@50 are shown in Fig. \ref{fig:loss_coefficient}, in which darker blocks mean better performance. It can be found that a relatively large coefficient for the cart behavior performs best on Beibei and Taobao. The potential reason is that purchase interaction is too sparse to offer much information and view interaction is relatively far from the target behavior. For both datasets, the best performance of CHCF is achieved at $[1/6, 4/6, 1/6]$, which not only makes full use of the three types of behavior information and gives more emphasis on the cart behavior. \R{When facing too many behavior types, we suggest to separate these behaviors into two parts, crucial behaviors and non-crucial behaviors. Each part shares the same coefficient $\lambda$. Therefore, there would be only two coefficients required to be tuned. For example, in Tmall, “cart” is selected as the crucial behavior with coefficient $\lambda_1$ while the other three behaviors share the same coefficient $\lambda_2$. Hence, we have $\lambda_1+3\lambda_2=1$ and only vary $\lambda_1$ in $[0,1]$ to search for the best performance. After the tuning process, $\lambda_1$ is set to $1/2$ for “cart” behavior and $\lambda_2$ is set to $1/6$ for the other three behaviors.}

\subsection{\R{Cold-start Problem (RQ6)\label{sec:cold-start}}}

\begin{table*}[t]
\centering
\caption{{Model comparison on solving the cold start problem on Beibei.}}

\begin{tabular*}{0.99\textwidth}{@{\extracolsep{\fill}}lcccccccc@{}}
\toprule

\multirow{3}{*}{\textbf{Beibei}} & \multicolumn{4}{c}{Cold-start Users} & \multicolumn{4}{c}{Cold-start Items} \\
\cline{2-5}\cline{6-9} 
& \multicolumn{2}{c}{HR} & \multicolumn{2}{c}{NDCG} & \multicolumn{2}{c}{HR} & \multicolumn{2}{c}{NDCG} \\
\cline{2-3}\cline{4-5}\cline{6-7}\cline{8-9} 
 & \textbf{50} & \textbf{100} & \textbf{50} & \textbf{100}& \textbf{50} & \textbf{100} & \textbf{50} & \textbf{100}  \\
 \hline
 \textbf{EHCF}& 0.2986 &	0.3662 & 0.0978 & 0.1089 & 0.2938 & 0.4723 & 0.0821 &	0.1108 \\
 
% \textbf{CHCF}& 0.1232 & 0.2254 & 0.2687 & 0.0694 & 0.0919 & 0.0985 \\
% \textbf{CHCF-U} & 0.1270&0.2294& 0.2769&0.0705&0.0932&0.1009 \\

% \hline
\textbf{MF-CHCF} & \textbf{0.3592} & \textbf{0.4170}  & \textbf{0.1439}  & \textbf{0.1535}  & \textbf{0.3287} &\textbf{0.5684} & \textbf{0.0874} &\textbf{0.1263}   \\
\textbf{GMF-CHCF} & \textbf{0.3724} & \textbf{0.4436}  & \textbf{0.1405}  & \textbf{0.1524}  & \textbf{0.3370} &\textbf{0.5413} & \textbf{0.0846} &\textbf{0.1175}   \\

\bottomrule
 
\end{tabular*}
\label{tab:cold_start_result}
\end{table*}

{In this section, we look into the ability of the proposed CHCF for tackling the cold-start problem. 
When it comes to cold-start users, CHCF can finetune the corresponding user embeddings and user heterogeneity matrix $\mathbf{H}_{cold}$ with online behavior records, while the previous user embeddings and heterogeneity matrix $\mathbf{H}_{previous}$ of previous users, along with the item heterogeneity matrix $\mathbf{G}$ are fixed. When it comes to cold-start items, we can train their embeddings and item heterogeneity matrix $\mathbf{G}_{cold}$ based on behavior records, while the previous item embeddings and item heterogeneity matrix $\mathbf{G}_{previous}$, along with the user heterogeneity matrix $\mathbf{H}$ are fixed. }

{We compare the performance of our CHCF with EHCF for the cold-start problem on Beibei in two settings. In the first settings, we randomly select 1000 users as the cold-start users, and regard the other users as the previous users. Then, the training set can be split into two subsets according to the types of users. We first optimize the model with the training subset of the previous users, and then retrain it with the training subset of the cold-start users, followed by the evaluation on test data of the cold-start users. In the second setting, we randomly select 200 items as the cold-start items, and regard the other items as the previous items. The following process of training and evaluation is similar to the process for cold-start users. The performance ais shown in Table \ref{tab:cold_start_result}. As we can see, in both two settings, MF-CHCF and GMF-CHCF outperform EHCF under all metrics, which demonstrates the superiority of our CHCF for the cold-start problem. Thus, our model is suitable for online recommendation platforms in real-world industry industrial scenarios. 
}

\subsection{Case Study (RQ7)} 
\begin{figure}[t]
\centering
\includegraphics[width=0.7\textwidth]{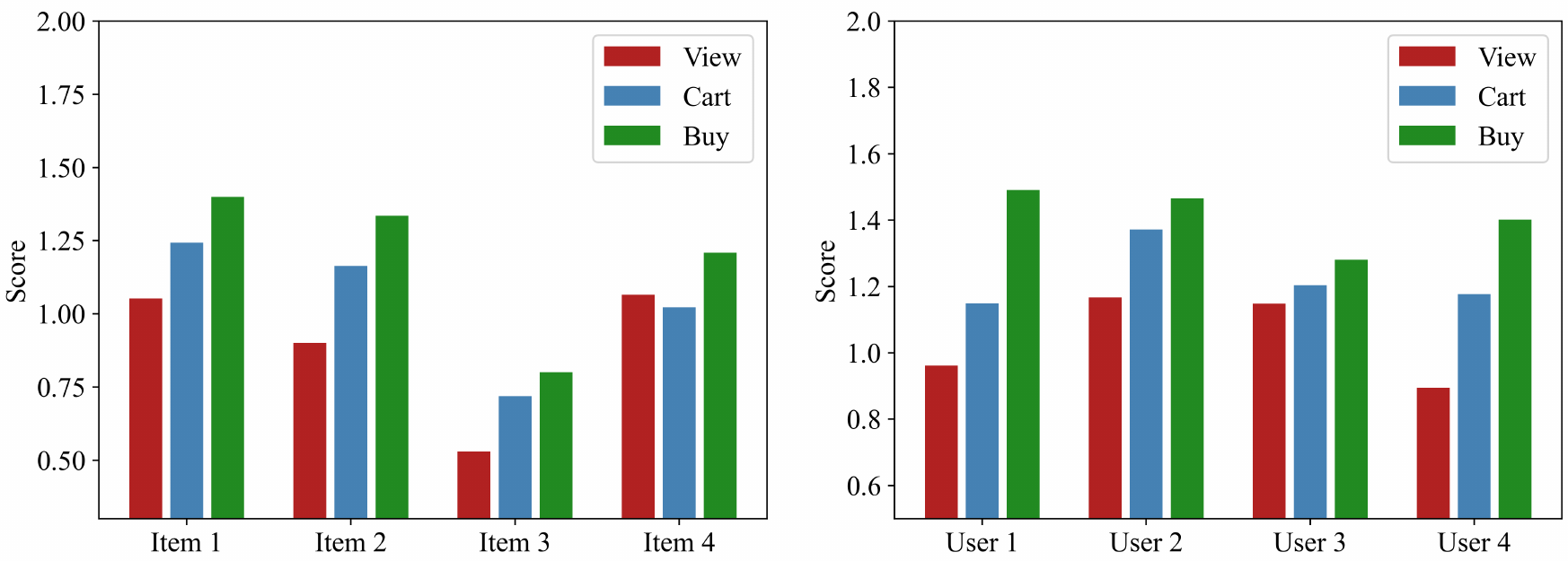}
\caption{Visualization of the learned upper thresholds of several users and items.
\label{fig:dimension}}
\end{figure}

\begin{figure}[t]
\centering
\includegraphics[width=0.7\textwidth]{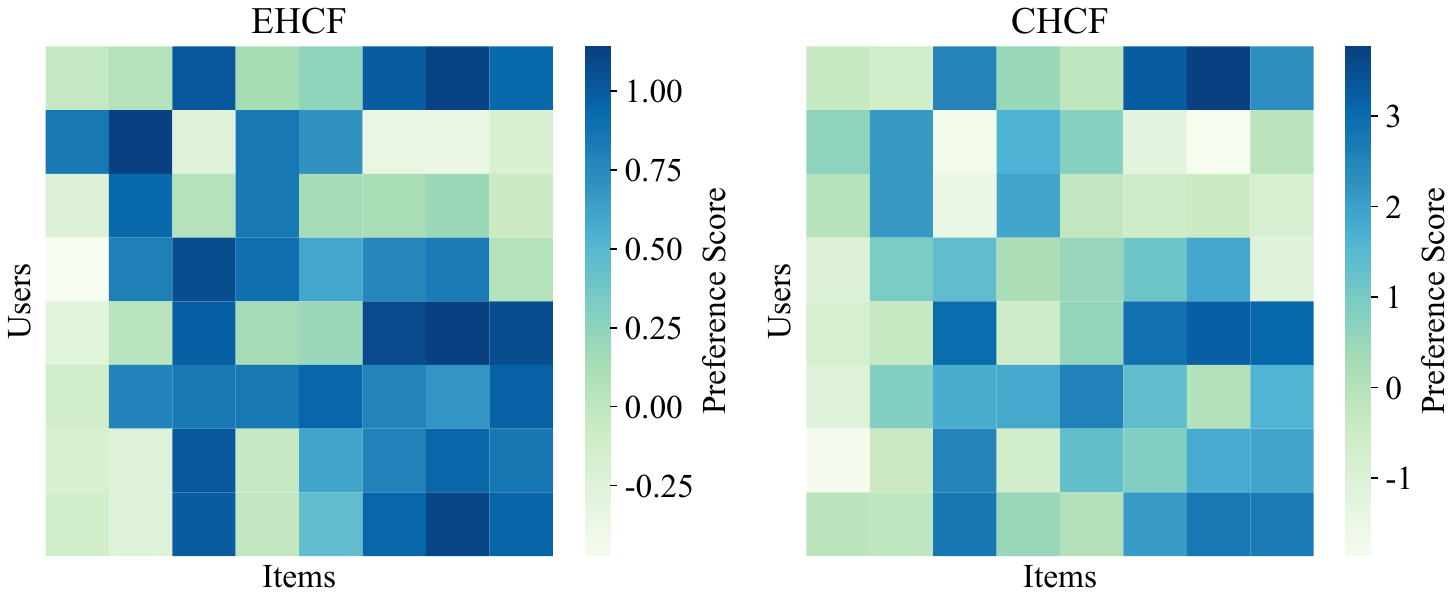}
\caption{{Visualization of the preference scores of several users and items.}
\label{fig:case_inter}}
\end{figure}

To provide a more intuitive understanding of the adaptive user- and item-specific thresholds, we randomly fix a user (item) to visualize thresholds for a list of random items (user) and plot the upper threshold for each behavior in Fig. \ref{fig:dimension}. {Note that we do not add the sigmoid function at the end of collaborative filtering models following~\cite{chen2021graph,chen2020efficient}. Hence, our model could produce preference scores larger than 1, which may result in upper thresholds over 1.} We summarize the following findings: 
1) \R{From the comparison of thresholds among different behaviors, we find that the thresholds of cart behavior are larger than these of view behavior generally, which implies that cart behavior implies a stronger signal. Therefore, without prior information, our CHCF can adaptively learn potential cascading relationships between behaviors from the data.}
2) From the comparison among different users, it is obvious that the thresholds are various for different users, even for the same item on each behavior. As a result, the fixed thresholds for all users are inapposite. Similar results can be obtained for items. Our adaptive threshold strategy could release the impact of the user/item heterogeneity to benefit the performances of recommender systems.
3) In some cases, the thresholds of cart behavior are smaller than these of view behavior, which is interesting and contrary to common sense. This point shows our adaptive threshold strategy is essential for modeling heterogeneous interaction data. {We further visualize the preference scores of several users and items for EHCF (optimized with the regression loss) and CHCF (optimized with our loss) in Fig. \ref{fig:case_inter}. It can be observed that for EHCF, the scores are overall strict close to 1 and 0 for training data which implies potential overfitting, and does not depict the diversities of users and items, especially in multi-behavior scenarios. By contrast, the scores produced by our CHCF are relatively smooth and generalized. }

\section{Conclusion}\label{sec:con}
In this paper, we introduce a novel framework named CHCF for recommendations with heterogeneous user feedback. To be specific, we introduce multiple bounds for different users and items to model selection criteria, which will guide implicit data learning in the form of hinge loss. We further build a unified framework which learns user preference and selection criteria alternatively. Extensive experiments on three popular datasets show that CHCF outperforms the state-of-the-art recommendation models consistently. 
In future work, we will try to incorporate side context knowledge into the procedure of modeling selection criteria, including user’s profiles and items’ textual descriptions. \R{Moreover, we would further attempt to extend our framework into more efficient models for industrial large-scale recommender platforms.}

%%
%% The acknowledgments section is defined using the "acks" environment
%% (and NOT an unnumbered section). This ensures the proper
%% identification of the section in the article metadata, and the
%% consistent spelling of the heading.

\begin{acks}
The authors are grateful to the anonymous reviewers for critically reading the manuscript and for giving important suggestions to improve their paper. This work was supported by the National Key Research and Development
Program of China (2021YFF1200902) and the National Natural Science
Foundation of China (32270689, 62106008 and 62276002).

\end{acks}

%%
%% The next two lines define the bibliography style to be used, and
%% the bibliography file.
\bibliographystyle{ACM-Reference-Format}
\bibliography{ms}

%%% -*-BibTeX-*-
%%% Do NOT edit. File created by BibTeX with style
%%% ACM-Reference-Format-Journals [18-Jan-2012].

\begin{thebibliography}{70}

%%% ====================================================================
%%% NOTE TO THE USER: you can override these defaults by providing
%%% customized versions of any of these macros before the \bibliography
%%% command.  Each of them MUST provide its own final punctuation,
%%% except for \shownote{}, \showDOI{}, and \showURL{}.  The latter two
%%% do not use final punctuation, in order to avoid confusing it with
%%% the Web address.
%%%
%%% To suppress output of a particular field, define its macro to expand
%%% to an empty string, or better, \unskip, like this:
%%%
%%% \newcommand{\showDOI}[1]{\unskip}   % LaTeX syntax
%%%
%%% \def \showDOI #1{\unskip}           % plain TeX syntax
%%%
%%% ====================================================================

\ifx \showCODEN    \undefined \def \showCODEN     #1{\unskip}     \fi
\ifx \showDOI      \undefined \def \showDOI       #1{#1}\fi
\ifx \showISBNx    \undefined \def \showISBNx     #1{\unskip}     \fi
\ifx \showISBNxiii \undefined \def \showISBNxiii  #1{\unskip}     \fi
\ifx \showISSN     \undefined \def \showISSN      #1{\unskip}     \fi
\ifx \showLCCN     \undefined \def \showLCCN      #1{\unskip}     \fi
\ifx \shownote     \undefined \def \shownote      #1{#1}          \fi
\ifx \showarticletitle \undefined \def \showarticletitle #1{#1}   \fi
\ifx \showURL      \undefined \def \showURL       {\relax}        \fi
% The following commands are used for tagged output and should be
% invisible to TeX
\providecommand\bibfield[2]{#2}
\providecommand\bibinfo[2]{#2}
\providecommand\natexlab[1]{#1}
\providecommand\showeprint[2][]{arXiv:#2}

\bibitem[Argyriou et~al\mbox{.}(2007)]%
        {argyriou2007multi}
\bibfield{author}{\bibinfo{person}{Andreas Argyriou},
  \bibinfo{person}{Theodoros Evgeniou}, {and} \bibinfo{person}{Massimiliano
  Pontil}.} \bibinfo{year}{2007}\natexlab{}.
\newblock \showarticletitle{Multi-task feature learning}. In
  \bibinfo{booktitle}{\emph{Proceedings of the Conference on Neural Information
  Processing Systems}}.
\newblock


\bibitem[Chen et~al\mbox{.}(2021)]%
        {chen2021graph}
\bibfield{author}{\bibinfo{person}{Chong Chen}, \bibinfo{person}{Weizhi Ma},
  \bibinfo{person}{Min Zhang}, \bibinfo{person}{Zhaowei Wang},
  \bibinfo{person}{Xiuqiang He}, \bibinfo{person}{Chenyang Wang},
  \bibinfo{person}{Yiqun Liu}, {and} \bibinfo{person}{Shaoping Ma}.}
  \bibinfo{year}{2021}\natexlab{}.
\newblock \showarticletitle{Graph Heterogeneous Multi-Relational
  Recommendation}. In \bibinfo{booktitle}{\emph{Proceedings of the AAAI
  Conference on Artificial Intelligence}}.
\newblock


\bibitem[Chen et~al\mbox{.}(2019)]%
        {chen2019efficient}
\bibfield{author}{\bibinfo{person}{Chong Chen}, \bibinfo{person}{Min Zhang},
  \bibinfo{person}{Chenyang Wang}, \bibinfo{person}{Weizhi Ma},
  \bibinfo{person}{Minming Li}, \bibinfo{person}{Yiqun Liu}, {and}
  \bibinfo{person}{Shaoping Ma}.} \bibinfo{year}{2019}\natexlab{}.
\newblock \showarticletitle{An efficient adaptive transfer neural network for
  social-aware recommendation}. In \bibinfo{booktitle}{\emph{Proceedings of the
  International ACM SIGIR Conference on Research \& Development in Information
  Retrieval}}.
\newblock


\bibitem[Chen et~al\mbox{.}(2020c)]%
        {chen2020tois}
\bibfield{author}{\bibinfo{person}{Chong Chen}, \bibinfo{person}{Min Zhang},
  \bibinfo{person}{Yongfeng Zhang}, \bibinfo{person}{Yiqun Liu}, {and}
  \bibinfo{person}{Shaoping Ma}.} \bibinfo{year}{2020}\natexlab{c}.
\newblock \showarticletitle{Efficient neural matrix factorization without
  sampling for recommendation}.
\newblock \bibinfo{journal}{\emph{ACM Transactions on Information Systems}}
  \bibinfo{volume}{38}, \bibinfo{number}{2} (\bibinfo{year}{2020}),
  \bibinfo{pages}{1--28}.
\newblock


\bibitem[Chen et~al\mbox{.}(2020d)]%
        {chen2020efficient}
\bibfield{author}{\bibinfo{person}{Chong Chen}, \bibinfo{person}{Min Zhang},
  \bibinfo{person}{Yongfeng Zhang}, \bibinfo{person}{Weizhi Ma},
  \bibinfo{person}{Yiqun Liu}, {and} \bibinfo{person}{Shaoping Ma}.}
  \bibinfo{year}{2020}\natexlab{d}.
\newblock \showarticletitle{Efficient heterogeneous collaborative filtering
  without negative sampling for recommendation}. In
  \bibinfo{booktitle}{\emph{Proceedings of the AAAI Conference on Artificial
  Intelligence}}.
\newblock


\bibitem[Chen et~al\mbox{.}(2020a)]%
        {chen2020simple}
\bibfield{author}{\bibinfo{person}{Ting Chen}, \bibinfo{person}{Simon
  Kornblith}, \bibinfo{person}{Mohammad Norouzi}, {and}
  \bibinfo{person}{Geoffrey Hinton}.} \bibinfo{year}{2020}\natexlab{a}.
\newblock \showarticletitle{A simple framework for contrastive learning of
  visual representations}. In \bibinfo{booktitle}{\emph{Proceedings of the
  International Conference on Machine Learning}}. \bibinfo{pages}{1597--1607}.
\newblock


\bibitem[Chen et~al\mbox{.}(2020b)]%
        {chen2020learning}
\bibfield{author}{\bibinfo{person}{Xusong Chen}, \bibinfo{person}{Dong Liu},
  \bibinfo{person}{Zhiwei Xiong}, {and} \bibinfo{person}{Zheng-Jun Zha}.}
  \bibinfo{year}{2020}\natexlab{b}.
\newblock \showarticletitle{Learning and fusing multiple user interest
  representations for micro-video and movie recommendations}.
\newblock \bibinfo{journal}{\emph{IEEE Transactions on Multimedia}}
  \bibinfo{volume}{23} (\bibinfo{year}{2020}), \bibinfo{pages}{484--496}.
\newblock


\bibitem[Deng et~al\mbox{.}(2019)]%
        {deng2019deepcf}
\bibfield{author}{\bibinfo{person}{Zhi-Hong Deng}, \bibinfo{person}{Ling
  Huang}, \bibinfo{person}{Chang-Dong Wang}, \bibinfo{person}{Jian-Huang Lai},
  {and} \bibinfo{person}{S~Yu Philip}.} \bibinfo{year}{2019}\natexlab{}.
\newblock \showarticletitle{Deepcf: A unified framework of representation
  learning and matching function learning in recommender system}. In
  \bibinfo{booktitle}{\emph{Proceedings of the AAAI Conference on Artificial
  Intelligence}}.
\newblock


\bibitem[Ding et~al\mbox{.}(2019)]%
        {ding2019sampler}
\bibfield{author}{\bibinfo{person}{Jingtao Ding}, \bibinfo{person}{Guanghui
  Yu}, \bibinfo{person}{Xiangnan He}, \bibinfo{person}{Fuli Feng},
  \bibinfo{person}{Yong Li}, {and} \bibinfo{person}{Depeng Jin}.}
  \bibinfo{year}{2019}\natexlab{}.
\newblock \showarticletitle{Sampler design for bayesian personalized ranking by
  leveraging view data}.
\newblock \bibinfo{journal}{\emph{IEEE Transactions on Knowledge and Data
  Engineering}} (\bibinfo{year}{2019}).
\newblock


\bibitem[Duchi et~al\mbox{.}(2011)]%
        {duchi2011adaptive}
\bibfield{author}{\bibinfo{person}{John Duchi}, \bibinfo{person}{Elad Hazan},
  {and} \bibinfo{person}{Yoram Singer}.} \bibinfo{year}{2011}\natexlab{}.
\newblock \showarticletitle{Adaptive subgradient methods for online learning
  and stochastic optimization.}
\newblock \bibinfo{journal}{\emph{JMLR}} \bibinfo{volume}{12},
  \bibinfo{number}{7} (\bibinfo{year}{2011}).
\newblock


\bibitem[Gao et~al\mbox{.}(2019a)]%
        {gao2019neural}
\bibfield{author}{\bibinfo{person}{Chen Gao}, \bibinfo{person}{Xiangnan He},
  \bibinfo{person}{Dahua Gan}, \bibinfo{person}{Xiangning Chen},
  \bibinfo{person}{Fuli Feng}, \bibinfo{person}{Yong Li},
  \bibinfo{person}{Tat-Seng Chua}, {and} \bibinfo{person}{Depeng Jin}.}
  \bibinfo{year}{2019}\natexlab{a}.
\newblock \showarticletitle{Neural multi-task recommendation from
  multi-behavior data}. In \bibinfo{booktitle}{\emph{Proceedings of the IEEE
  Conference on Data Engineering}}.
\newblock


\bibitem[Gao et~al\mbox{.}(2019b)]%
        {gao2019learning}
\bibfield{author}{\bibinfo{person}{Chen Gao}, \bibinfo{person}{Xiangnan He},
  \bibinfo{person}{Danhua Gan}, \bibinfo{person}{Xiangning Chen},
  \bibinfo{person}{Fuli Feng}, \bibinfo{person}{Yong Li},
  \bibinfo{person}{Tat-Seng Chua}, \bibinfo{person}{Lina Yao},
  \bibinfo{person}{Yang Song}, {and} \bibinfo{person}{Depeng Jin}.}
  \bibinfo{year}{2019}\natexlab{b}.
\newblock \showarticletitle{Learning to recommend with multiple cascading
  behaviors}.
\newblock \bibinfo{journal}{\emph{IEEE Transactions on Knowledge and Data
  Engineering}} (\bibinfo{year}{2019}).
\newblock


\bibitem[Gu et~al\mbox{.}(2022)]%
        {gu2022self}
\bibfield{author}{\bibinfo{person}{Shuyun Gu}, \bibinfo{person}{Xiao Wang},
  \bibinfo{person}{Chuan Shi}, {and} \bibinfo{person}{Ding Xiao}.}
  \bibinfo{year}{2022}\natexlab{}.
\newblock \showarticletitle{Self-supervised Graph Neural Networks for
  Multi-behavior Recommendation}. In \bibinfo{booktitle}{\emph{Proceedings of
  the International Joint Conference on Artificial Intelligence}}.
\newblock


\bibitem[He et~al\mbox{.}(2020)]%
        {he2020lightgcn}
\bibfield{author}{\bibinfo{person}{Xiangnan He}, \bibinfo{person}{Kuan Deng},
  \bibinfo{person}{Xiang Wang}, \bibinfo{person}{Yan Li},
  \bibinfo{person}{YongDong Zhang}, {and} \bibinfo{person}{Meng Wang}.}
  \bibinfo{year}{2020}\natexlab{}.
\newblock \showarticletitle{LightGCN: simplifying and powering graph
  convolution network for recommendation}. In
  \bibinfo{booktitle}{\emph{Proceedings of the International ACM SIGIR
  Conference on Research \& Development in Information Retrieval}}.
\newblock


\bibitem[He et~al\mbox{.}(2017)]%
        {he2017neural}
\bibfield{author}{\bibinfo{person}{Xiangnan He}, \bibinfo{person}{Lizi Liao},
  \bibinfo{person}{Hanwang Zhang}, \bibinfo{person}{Liqiang Nie},
  \bibinfo{person}{Xia Hu}, {and} \bibinfo{person}{Tat-Seng Chua}.}
  \bibinfo{year}{2017}\natexlab{}.
\newblock \showarticletitle{Neural collaborative filtering}. In
  \bibinfo{booktitle}{\emph{Proceedings of the Web Conference}}.
\newblock


\bibitem[He et~al\mbox{.}(2016)]%
        {he2016fast}
\bibfield{author}{\bibinfo{person}{Xiangnan He}, \bibinfo{person}{Hanwang
  Zhang}, \bibinfo{person}{Minyen Kan}, {and} \bibinfo{person}{Tatseng Chua}.}
  \bibinfo{year}{2016}\natexlab{}.
\newblock \showarticletitle{Fast Matrix Factorization for Online Recommendation
  with Implicit Feedback}. In \bibinfo{booktitle}{\emph{Proceedings of the
  International ACM SIGIR Conference on Research \& Development in Information
  Retrieval}}.
\newblock


\bibitem[Hsieh et~al\mbox{.}(2017)]%
        {hsieh2017collaborative}
\bibfield{author}{\bibinfo{person}{Cheng-Kang Hsieh}, \bibinfo{person}{Longqi
  Yang}, \bibinfo{person}{Yin Cui}, \bibinfo{person}{Tsung-Yi Lin},
  \bibinfo{person}{Serge Belongie}, {and} \bibinfo{person}{Deborah Estrin}.}
  \bibinfo{year}{2017}\natexlab{}.
\newblock \showarticletitle{Collaborative metric learning}. In
  \bibinfo{booktitle}{\emph{Proceedings of the Web Conference}}.
\newblock


\bibitem[Hu et~al\mbox{.}(2008)]%
        {hu2008collaborative}
\bibfield{author}{\bibinfo{person}{Yifan Hu}, \bibinfo{person}{Yehuda Koren},
  {and} \bibinfo{person}{Chris Volinsky}.} \bibinfo{year}{2008}\natexlab{}.
\newblock \showarticletitle{Collaborative filtering for implicit feedback
  datasets}. In \bibinfo{booktitle}{\emph{Proceedings of the International
  Conference on Data Mining}}.
\newblock


\bibitem[J{\"a}rvelin and Kek{\"a}l{\"a}inen(2000)]%
        {jarvelin2000ir}
\bibfield{author}{\bibinfo{person}{Kalervo J{\"a}rvelin} {and}
  \bibinfo{person}{Jaana Kek{\"a}l{\"a}inen}.} \bibinfo{year}{2000}\natexlab{}.
\newblock \showarticletitle{IR evaluation methods for retrieving highly
  relevant documents}. In \bibinfo{booktitle}{\emph{Proceedings of the
  International ACM SIGIR Conference on Research \& Development in Information
  Retrieval}}.
\newblock


\bibitem[Jin et~al\mbox{.}(2020)]%
        {jin2020multi}
\bibfield{author}{\bibinfo{person}{Bowen Jin}, \bibinfo{person}{Chen Gao},
  \bibinfo{person}{Xiangnan He}, \bibinfo{person}{Depeng Jin}, {and}
  \bibinfo{person}{Yong Li}.} \bibinfo{year}{2020}\natexlab{}.
\newblock \showarticletitle{Multi-behavior recommendation with graph
  convolutional networks}. In \bibinfo{booktitle}{\emph{Proceedings of the
  International ACM SIGIR Conference on Research \& Development in Information
  Retrieval}}.
\newblock


\bibitem[Ju et~al\mbox{.}(2023)]%
        {ju2023comprehensive}
\bibfield{author}{\bibinfo{person}{Wei Ju}, \bibinfo{person}{Zheng Fang},
  \bibinfo{person}{Yiyang Gu}, \bibinfo{person}{Zequn Liu},
  \bibinfo{person}{Qingqing Long}, \bibinfo{person}{Ziyue Qiao},
  \bibinfo{person}{Yifang Qin}, \bibinfo{person}{Jianhao Shen},
  \bibinfo{person}{Fang Sun}, \bibinfo{person}{Zhiping Xiao}, {et~al\mbox{.}}}
  \bibinfo{year}{2023}\natexlab{}.
\newblock \showarticletitle{A Comprehensive Survey on Deep Graph Representation
  Learning}.
\newblock \bibinfo{journal}{\emph{arXiv preprint arXiv:2304.05055}}
  (\bibinfo{year}{2023}).
\newblock


\bibitem[Karypis(2001)]%
        {karypis2001evaluation}
\bibfield{author}{\bibinfo{person}{George Karypis}.}
  \bibinfo{year}{2001}\natexlab{}.
\newblock \showarticletitle{Evaluation of item-based top-n recommendation
  algorithms}. In \bibinfo{booktitle}{\emph{Proceedings of the International
  Conference on Information and Knowledge Management}}.
\newblock


\bibitem[Kipf and Welling(2017)]%
        {kipf2016semi}
\bibfield{author}{\bibinfo{person}{Thomas~N. Kipf} {and} \bibinfo{person}{Max
  Welling}.} \bibinfo{year}{2017}\natexlab{}.
\newblock \showarticletitle{Semi-supervised classification with graph
  convolutional networks}. In \bibinfo{booktitle}{\emph{Proceedings of the
  International Conference on Learning Representations}}.
\newblock


\bibitem[Koren(2008)]%
        {koren2008factorization}
\bibfield{author}{\bibinfo{person}{Yehuda Koren}.}
  \bibinfo{year}{2008}\natexlab{}.
\newblock \showarticletitle{Factorization meets the neighborhood: a
  multifaceted collaborative filtering model}. In
  \bibinfo{booktitle}{\emph{Proceedings of the International ACM SIGKDD
  Conference on Knowledge Discovery \& Data Mining}}.
\newblock


\bibitem[Koren et~al\mbox{.}(2009)]%
        {koren2009matrix}
\bibfield{author}{\bibinfo{person}{Yehuda Koren}, \bibinfo{person}{Robert
  Bell}, {and} \bibinfo{person}{Chris Volinsky}.}
  \bibinfo{year}{2009}\natexlab{}.
\newblock \showarticletitle{Matrix factorization techniques for recommender
  systems}.
\newblock \bibinfo{journal}{\emph{Computer}} \bibinfo{volume}{42},
  \bibinfo{number}{8} (\bibinfo{year}{2009}), \bibinfo{pages}{30--37}.
\newblock


\bibitem[Krichene and Rendle(2020)]%
        {krichene2020sampled}
\bibfield{author}{\bibinfo{person}{Walid Krichene} {and}
  \bibinfo{person}{Steffen Rendle}.} \bibinfo{year}{2020}\natexlab{}.
\newblock \showarticletitle{On sampled metrics for item recommendation}. In
  \bibinfo{booktitle}{\emph{Proceedings of the International ACM SIGKDD
  Conference on Knowledge Discovery \& Data Mining}}.
\newblock


\bibitem[Krohn-Grimberghe et~al\mbox{.}(2012)]%
        {krohn2012multi}
\bibfield{author}{\bibinfo{person}{Artus Krohn-Grimberghe},
  \bibinfo{person}{Lucas Drumond}, \bibinfo{person}{Christoph Freudenthaler},
  {and} \bibinfo{person}{Lars Schmidt-Thieme}.}
  \bibinfo{year}{2012}\natexlab{}.
\newblock \showarticletitle{Multi-relational matrix factorization using
  bayesian personalized ranking for social network data}. In
  \bibinfo{booktitle}{\emph{Proceedings of the International ACM Conference on
  Web Search \& Data Mining}}. \bibinfo{pages}{173--182}.
\newblock


\bibitem[Li et~al\mbox{.}(2021a)]%
        {li2020iptv}
\bibfield{author}{\bibinfo{person}{Guangyu Li}, \bibinfo{person}{Lina Qiu},
  \bibinfo{person}{Chenguang Yu}, \bibinfo{person}{Houwei Cao},
  \bibinfo{person}{Yong Liu}, {and} \bibinfo{person}{Can Yang}.}
  \bibinfo{year}{2021}\natexlab{a}.
\newblock \showarticletitle{IPTV channel zapping recommendation with attention
  mechanism}.
\newblock \bibinfo{journal}{\emph{IEEE Transactions on Multimedia}}
  \bibinfo{volume}{23} (\bibinfo{year}{2021}), \bibinfo{pages}{538--549}.
\newblock


\bibitem[Li et~al\mbox{.}(2021b)]%
        {li2020viewpoint}
\bibfield{author}{\bibinfo{person}{Ke Li}, \bibinfo{person}{Yuxia Wu},
  \bibinfo{person}{Yao Xue}, {and} \bibinfo{person}{Xueming Qian}.}
  \bibinfo{year}{2021}\natexlab{b}.
\newblock \showarticletitle{Viewpoint Recommendation Based on Object-Oriented
  3D Scene Reconstruction}.
\newblock \bibinfo{journal}{\emph{IEEE Transactions on Multimedia}}
  \bibinfo{volume}{23} (\bibinfo{year}{2021}), \bibinfo{pages}{257--267}.
\newblock


\bibitem[Li et~al\mbox{.}(2020)]%
        {li2020symmetric}
\bibfield{author}{\bibinfo{person}{Mingming Li}, \bibinfo{person}{Shuai Zhang},
  \bibinfo{person}{Fuqing Zhu}, \bibinfo{person}{Wanhui Qian},
  \bibinfo{person}{Liangjun Zang}, \bibinfo{person}{Jizhong Han}, {and}
  \bibinfo{person}{Songlin Hu}.} \bibinfo{year}{2020}\natexlab{}.
\newblock \showarticletitle{Symmetric Metric Learning with Adaptive Margin for
  Recommendation.}. In \bibinfo{booktitle}{\emph{Proceedings of the AAAI
  Conference on Artificial Intelligence}}.
\newblock


\bibitem[Li et~al\mbox{.}(2022)]%
        {li2022causal}
\bibfield{author}{\bibinfo{person}{Qian Li}, \bibinfo{person}{Xiangmeng Wang},
  \bibinfo{person}{Zhichao Wang}, {and} \bibinfo{person}{Guandong Xu}.}
  \bibinfo{year}{2022}\natexlab{}.
\newblock \showarticletitle{Be causal: De-biasing social network confounding in
  recommendation}.
\newblock \bibinfo{journal}{\emph{ACM Transactions on Knowledge Discovery from
  Data}} (\bibinfo{year}{2022}).
\newblock


\bibitem[Liang et~al\mbox{.}(2016)]%
        {liang2016modeling}
\bibfield{author}{\bibinfo{person}{Dawen Liang}, \bibinfo{person}{Laurent
  Charlin}, \bibinfo{person}{James Mcinerney}, {and} \bibinfo{person}{David~M
  Blei}.} \bibinfo{year}{2016}\natexlab{}.
\newblock \showarticletitle{Modeling User Exposure in Recommendation}. In
  \bibinfo{booktitle}{\emph{Proceedings of the Web Conference}}.
\newblock


\bibitem[Liang et~al\mbox{.}(2018)]%
        {liang2018variational}
\bibfield{author}{\bibinfo{person}{Dawen Liang}, \bibinfo{person}{Rahul~G
  Krishnan}, \bibinfo{person}{Matthew~D Hoffman}, {and} \bibinfo{person}{Tony
  Jebara}.} \bibinfo{year}{2018}\natexlab{}.
\newblock \showarticletitle{Variational autoencoders for collaborative
  filtering}. In \bibinfo{booktitle}{\emph{Proceedings of the Web Conference}}.
\newblock


\bibitem[Liu et~al\mbox{.}(2017)]%
        {liu2017related}
\bibfield{author}{\bibinfo{person}{David~C Liu}, \bibinfo{person}{Stephanie
  Rogers}, \bibinfo{person}{Raymond Shiau}, \bibinfo{person}{Dmitry Kislyuk},
  \bibinfo{person}{Kevin~C Ma}, \bibinfo{person}{Zhigang Zhong},
  \bibinfo{person}{Jenny Liu}, {and} \bibinfo{person}{Yushi Jing}.}
  \bibinfo{year}{2017}\natexlab{}.
\newblock \showarticletitle{Related pins at pinterest: The evolution of a
  real-world recommender system}. In \bibinfo{booktitle}{\emph{Proceedings of
  the Web Conference}}.
\newblock


\bibitem[Liu et~al\mbox{.}(2021)]%
        {liu2021bayesian}
\bibfield{author}{\bibinfo{person}{Huafeng Liu}, \bibinfo{person}{Liping Jing},
  \bibinfo{person}{Jingxuan Wen}, \bibinfo{person}{Pengyu Xu},
  \bibinfo{person}{Jian Yu}, {and} \bibinfo{person}{Michael~K Ng}.}
  \bibinfo{year}{2021}\natexlab{}.
\newblock \showarticletitle{Bayesian additive matrix approximation for social
  recommendation}.
\newblock \bibinfo{journal}{\emph{ACM Transactions on Knowledge Discovery from
  Data}} \bibinfo{volume}{16}, \bibinfo{number}{1} (\bibinfo{year}{2021}),
  \bibinfo{pages}{1--34}.
\newblock


\bibitem[Loni et~al\mbox{.}(2016)]%
        {loni2016bayesian}
\bibfield{author}{\bibinfo{person}{Babak Loni}, \bibinfo{person}{Roberto
  Pagano}, \bibinfo{person}{Martha Larson}, {and} \bibinfo{person}{Alan
  Hanjalic}.} \bibinfo{year}{2016}\natexlab{}.
\newblock \showarticletitle{Bayesian personalized ranking with multi-channel
  user feedback}.
\newblock


\bibitem[Ma et~al\mbox{.}(2020)]%
        {ma2020probabilistic}
\bibfield{author}{\bibinfo{person}{Chen Ma}, \bibinfo{person}{Liheng Ma},
  \bibinfo{person}{Yingxue Zhang}, \bibinfo{person}{Ruiming Tang},
  \bibinfo{person}{Xue Liu}, {and} \bibinfo{person}{Mark Coates}.}
  \bibinfo{year}{2020}\natexlab{}.
\newblock \showarticletitle{Probabilistic Metric Learning with Adaptive Margin
  for Top-K Recommendation}. In \bibinfo{booktitle}{\emph{Proceedings of the
  International ACM SIGKDD Conference on Knowledge Discovery \& Data Mining}}.
\newblock


\bibitem[Ning and Karypis(2011)]%
        {ning2011slim}
\bibfield{author}{\bibinfo{person}{Xia Ning} {and} \bibinfo{person}{George
  Karypis}.} \bibinfo{year}{2011}\natexlab{}.
\newblock \showarticletitle{Slim: Sparse linear methods for top-n recommender
  systems}. In \bibinfo{booktitle}{\emph{Proceedings of the International
  Conference on Data Mining}}.
\newblock


\bibitem[Oard et~al\mbox{.}(1998)]%
        {oard1998implicit}
\bibfield{author}{\bibinfo{person}{Douglas~W Oard}, \bibinfo{person}{Jinmook
  Kim}, {et~al\mbox{.}}} \bibinfo{year}{1998}\natexlab{}.
\newblock \showarticletitle{Implicit feedback for recommender systems}. In
  \bibinfo{booktitle}{\emph{AAAI Workshop}}.
\newblock


\bibitem[Park et~al\mbox{.}(2018)]%
        {park2018collaborative}
\bibfield{author}{\bibinfo{person}{Chanyoung Park}, \bibinfo{person}{Donghyun
  Kim}, \bibinfo{person}{Xing Xie}, {and} \bibinfo{person}{Hwanjo Yu}.}
  \bibinfo{year}{2018}\natexlab{}.
\newblock \showarticletitle{Collaborative translational metric learning}. In
  \bibinfo{booktitle}{\emph{Proceedings of the International Conference on Data
  Mining}}.
\newblock


\bibitem[Polato and Aiolli(2018)]%
        {polato2018boolean}
\bibfield{author}{\bibinfo{person}{Mirko Polato} {and} \bibinfo{person}{Fabio
  Aiolli}.} \bibinfo{year}{2018}\natexlab{}.
\newblock \showarticletitle{Boolean kernels for collaborative filtering in
  top-N item recommendation}.
\newblock \bibinfo{journal}{\emph{Neurocomputing}}  \bibinfo{volume}{286}
  (\bibinfo{year}{2018}), \bibinfo{pages}{214--225}.
\newblock


\bibitem[Qin et~al\mbox{.}(2023a)]%
        {qin2023learning}
\bibfield{author}{\bibinfo{person}{Yifang Qin}, \bibinfo{person}{Wei Ju},
  \bibinfo{person}{Hongjun Wu}, \bibinfo{person}{Xiao Luo}, {and}
  \bibinfo{person}{Ming Zhang}.} \bibinfo{year}{2023}\natexlab{a}.
\newblock \showarticletitle{Learning Graph ODE for Continuous-Time Sequential
  Recommendation}.
\newblock \bibinfo{journal}{\emph{arXiv preprint arXiv:2304.07042}}
  (\bibinfo{year}{2023}).
\newblock


\bibitem[Qin et~al\mbox{.}(2023b)]%
        {qin2023disenpoi}
\bibfield{author}{\bibinfo{person}{Yifang Qin}, \bibinfo{person}{Yifan Wang},
  \bibinfo{person}{Fang Sun}, \bibinfo{person}{Wei Ju}, \bibinfo{person}{Xuyang
  Hou}, \bibinfo{person}{Zhe Wang}, \bibinfo{person}{Jia Cheng},
  \bibinfo{person}{Jun Lei}, {and} \bibinfo{person}{Ming Zhang}.}
  \bibinfo{year}{2023}\natexlab{b}.
\newblock \showarticletitle{DisenPOI: Disentangling Sequential and Geographical
  Influence for Point-of-Interest Recommendation}. In
  \bibinfo{booktitle}{\emph{Proceedings of the Sixteenth ACM International
  Conference on Web Search and Data Mining}}. \bibinfo{pages}{508--516}.
\newblock


\bibitem[Rendle et~al\mbox{.}(2009)]%
        {rendle2009bpr}
\bibfield{author}{\bibinfo{person}{Steffen Rendle}, \bibinfo{person}{Christoph
  Freudenthaler}, \bibinfo{person}{Zeno Gantner}, {and} \bibinfo{person}{Lars
  Schmidtthieme}.} \bibinfo{year}{2009}\natexlab{}.
\newblock \showarticletitle{BPR: Bayesian personalized ranking from implicit
  feedback}. In \bibinfo{booktitle}{\emph{Proc. of UAI}}.
\newblock


\bibitem[Ricci et~al\mbox{.}(2011)]%
        {ricci2011introduction}
\bibfield{author}{\bibinfo{person}{Francesco Ricci}, \bibinfo{person}{Lior
  Rokach}, {and} \bibinfo{person}{Bracha Shapira}.}
  \bibinfo{year}{2011}\natexlab{}.
\newblock \showarticletitle{Introduction to recommender systems handbook}.
\newblock In \bibinfo{booktitle}{\emph{Recommender systems handbook}}.
  \bibinfo{pages}{1--35}.
\newblock


\bibitem[Sang et~al\mbox{.}(2021)]%
        {sang2020context}
\bibfield{author}{\bibinfo{person}{Lei Sang}, \bibinfo{person}{Min Xu},
  \bibinfo{person}{Shengsheng Qian}, \bibinfo{person}{Matt Martin},
  \bibinfo{person}{Peter Li}, {and} \bibinfo{person}{Xindong Wu}.}
  \bibinfo{year}{2021}\natexlab{}.
\newblock \showarticletitle{Context-dependent propagating based video
  recommendation in multimodal heterogeneous information networks}.
\newblock \bibinfo{journal}{\emph{IEEE Transactions on Multimedia}}
  (\bibinfo{year}{2021}).
\newblock


\bibitem[Singh and Gordon(2008)]%
        {singh2008relational}
\bibfield{author}{\bibinfo{person}{Ajit~P Singh} {and}
  \bibinfo{person}{Geoffrey~J Gordon}.} \bibinfo{year}{2008}\natexlab{}.
\newblock \showarticletitle{Relational learning via collective matrix
  factorization}. In \bibinfo{booktitle}{\emph{Proceedings of the International
  ACM SIGKDD Conference on Knowledge Discovery \& Data Mining}}.
\newblock


\bibitem[Tay et~al\mbox{.}(2018)]%
        {tay2018latent}
\bibfield{author}{\bibinfo{person}{Yi Tay}, \bibinfo{person}{Luu Anh~Tuan},
  {and} \bibinfo{person}{Siu~Cheung Hui}.} \bibinfo{year}{2018}\natexlab{}.
\newblock \showarticletitle{Latent relational metric learning via memory-based
  attention for collaborative ranking}. In
  \bibinfo{booktitle}{\emph{Proceedings of the Web Conference}}.
\newblock


\bibitem[Veli{\v{c}}kovi{\'c} et~al\mbox{.}(2017)]%
        {velivckovic2017graph}
\bibfield{author}{\bibinfo{person}{Petar Veli{\v{c}}kovi{\'c}},
  \bibinfo{person}{Guillem Cucurull}, \bibinfo{person}{Arantxa Casanova},
  \bibinfo{person}{Adriana Romero}, \bibinfo{person}{Pietro Lio}, {and}
  \bibinfo{person}{Yoshua Bengio}.} \bibinfo{year}{2017}\natexlab{}.
\newblock \showarticletitle{Graph attention networks}. In
  \bibinfo{booktitle}{\emph{Proceedings of the International Conference on
  Learning Representations}}.
\newblock


\bibitem[Wang et~al\mbox{.}(2019b)]%
        {wang2019dmran}
\bibfield{author}{\bibinfo{person}{Huizhao Wang}, \bibinfo{person}{Guanfeng
  Liu}, \bibinfo{person}{An Liu}, \bibinfo{person}{Zhixu Li}, {and}
  \bibinfo{person}{Kai Zheng}.} \bibinfo{year}{2019}\natexlab{b}.
\newblock \showarticletitle{DMRAN: A Hierarchical Fine-Grained Attention-Based
  Network for Recommendation.}. In \bibinfo{booktitle}{\emph{Proceedings of the
  International Joint Conference on Artificial Intelligence}}.
  \bibinfo{pages}{3698--3704}.
\newblock


\bibitem[Wang et~al\mbox{.}(2019c)]%
        {wang2019dmfp}
\bibfield{author}{\bibinfo{person}{Huizhao Wang}, \bibinfo{person}{Guanfeng
  Liu}, \bibinfo{person}{Yan Zhao}, \bibinfo{person}{Bolong Zheng},
  \bibinfo{person}{Pengpeng Zhao}, {and} \bibinfo{person}{Kai Zheng}.}
  \bibinfo{year}{2019}\natexlab{c}.
\newblock \showarticletitle{DMFP: a dynamic multi-faceted fine-grained
  preference model for recommendation}. In
  \bibinfo{booktitle}{\emph{Proceedings of the International Conference on Data
  Mining}}. \bibinfo{pages}{608--617}.
\newblock


\bibitem[Wang et~al\mbox{.}(2022)]%
        {wang2022learning}
\bibfield{author}{\bibinfo{person}{Ke Wang}, \bibinfo{person}{Yanmin Zhu},
  \bibinfo{person}{Haobing Liu}, \bibinfo{person}{Tianzi Zang}, {and}
  \bibinfo{person}{Chunyang Wang}.} \bibinfo{year}{2022}\natexlab{}.
\newblock \showarticletitle{Learning Aspect-aware High-order Representations
  from Ratings and Reviews for Recommendation}.
\newblock \bibinfo{journal}{\emph{ACM Transactions on Knowledge Discovery from
  Data}} (\bibinfo{year}{2022}).
\newblock


\bibitem[Wang et~al\mbox{.}(2019a)]%
        {wang2019ngcf}
\bibfield{author}{\bibinfo{person}{Xiang Wang}, \bibinfo{person}{Xiangnan He},
  \bibinfo{person}{Meng Wang}, \bibinfo{person}{Fuli Feng}, {and}
  \bibinfo{person}{Tat-Seng Chua}.} \bibinfo{year}{2019}\natexlab{a}.
\newblock \showarticletitle{Neural graph collaborative filtering}. In
  \bibinfo{booktitle}{\emph{Proceedings of the International ACM SIGIR
  Conference on Research \& Development in Information Retrieval}}.
\newblock


\bibitem[Wang et~al\mbox{.}(2021)]%
        {wang2021self}
\bibfield{author}{\bibinfo{person}{Xiao Wang}, \bibinfo{person}{Nian Liu},
  \bibinfo{person}{Hui Han}, {and} \bibinfo{person}{Chuan Shi}.}
  \bibinfo{year}{2021}\natexlab{}.
\newblock \showarticletitle{Self-supervised heterogeneous graph neural network
  with co-contrastive learning}. In \bibinfo{booktitle}{\emph{Proceedings of
  the International ACM SIGKDD Conference on Knowledge Discovery \& Data
  Mining}}. \bibinfo{pages}{1726--1736}.
\newblock


\bibitem[Wang et~al\mbox{.}(2020)]%
        {wang2020multi}
\bibfield{author}{\bibinfo{person}{Xiao Wang}, \bibinfo{person}{Ruijia Wang},
  \bibinfo{person}{Chuan Shi}, \bibinfo{person}{Guojie Song}, {and}
  \bibinfo{person}{Qingyong Li}.} \bibinfo{year}{2020}\natexlab{}.
\newblock \showarticletitle{Multi-component graph convolutional collaborative
  filtering}. In \bibinfo{booktitle}{\emph{Proceedings of the AAAI Conference
  on Artificial Intelligence}}. \bibinfo{pages}{6267--6274}.
\newblock


\bibitem[Wen et~al\mbox{.}(2020)]%
        {wen2020entire}
\bibfield{author}{\bibinfo{person}{Hong Wen}, \bibinfo{person}{Jing Zhang},
  \bibinfo{person}{Yuan Wang}, \bibinfo{person}{Fuyu Lv},
  \bibinfo{person}{Wentian Bao}, \bibinfo{person}{Quan Lin}, {and}
  \bibinfo{person}{Keping Yang}.} \bibinfo{year}{2020}\natexlab{}.
\newblock \showarticletitle{Entire space multi-task modeling via post-click
  behavior decomposition for conversion rate prediction}. In
  \bibinfo{booktitle}{\emph{Proceedings of the International ACM SIGIR
  Conference on Research \& Development in Information Retrieval}}.
  \bibinfo{pages}{2377--2386}.
\newblock


\bibitem[Wu et~al\mbox{.}(2021)]%
        {argo}
\bibfield{author}{\bibinfo{person}{Daqing Wu}, \bibinfo{person}{Xiao Luo},
  \bibinfo{person}{Zeyu Ma}, \bibinfo{person}{Chong Chen},
  \bibinfo{person}{Minghua Deng}, {and} \bibinfo{person}{Jinwen Ma}.}
  \bibinfo{year}{2021}\natexlab{}.
\newblock \showarticletitle{ARGO: Modeling Heterogeneity in E-commerce
  Recommendation.}. In \bibinfo{booktitle}{\emph{Proc. Int. Joint Conf. Neural
  Networks}}.
\newblock


\bibitem[Wu et~al\mbox{.}(2022)]%
        {wu2022multi}
\bibfield{author}{\bibinfo{person}{Yiqing Wu}, \bibinfo{person}{Ruobing Xie},
  \bibinfo{person}{Yongchun Zhu}, \bibinfo{person}{Xiang Ao},
  \bibinfo{person}{Xin Chen}, \bibinfo{person}{Xu Zhang},
  \bibinfo{person}{Fuzhen Zhuang}, \bibinfo{person}{Leyu Lin}, {and}
  \bibinfo{person}{Qing He}.} \bibinfo{year}{2022}\natexlab{}.
\newblock \showarticletitle{Multi-view Multi-behavior Contrastive Learning in
  Recommendation}. In \bibinfo{booktitle}{\emph{International Conference on
  Database Systems for Advanced Applications}}. \bibinfo{pages}{166--182}.
\newblock


\bibitem[Xia et~al\mbox{.}(2021a)]%
        {xia2021multi}
\bibfield{author}{\bibinfo{person}{Lianghao Xia}, \bibinfo{person}{Chao Huang},
  \bibinfo{person}{Yong Xu}, \bibinfo{person}{Peng Dai},
  \bibinfo{person}{Mengyin Lu}, {and} \bibinfo{person}{Liefeng Bo}.}
  \bibinfo{year}{2021}\natexlab{a}.
\newblock \showarticletitle{Multi-Behavior Enhanced Recommendation with
  Cross-Interaction Collaborative Relation Modeling}. In
  \bibinfo{booktitle}{\emph{Proceedings of the IEEE Conference on Data
  Engineering}}.
\newblock


\bibitem[Xia et~al\mbox{.}(2020)]%
        {xia2020multiplex}
\bibfield{author}{\bibinfo{person}{Lianghao Xia}, \bibinfo{person}{Chao Huang},
  \bibinfo{person}{Yong Xu}, \bibinfo{person}{Peng Dai}, \bibinfo{person}{Bo
  Zhang}, {and} \bibinfo{person}{Liefeng Bo}.} \bibinfo{year}{2020}\natexlab{}.
\newblock \showarticletitle{Multiplex behavioral relation learning for
  recommendation via memory augmented transformer network}. In
  \bibinfo{booktitle}{\emph{Proceedings of the International ACM SIGIR
  Conference on Research \& Development in Information Retrieval}}.
\newblock


\bibitem[Xia et~al\mbox{.}(2021b)]%
        {xia2021knowledge}
\bibfield{author}{\bibinfo{person}{Lianghao Xia}, \bibinfo{person}{Chao Huang},
  \bibinfo{person}{Yong Xu}, \bibinfo{person}{Peng Dai}, \bibinfo{person}{Xiyue
  Zhang}, \bibinfo{person}{Hongsheng Yang}, \bibinfo{person}{Jian Pei}, {and}
  \bibinfo{person}{Liefeng Bo}.} \bibinfo{year}{2021}\natexlab{b}.
\newblock \showarticletitle{Knowledge-enhanced hierarchical graph transformer
  network for multi-behavior recommendation}. In
  \bibinfo{booktitle}{\emph{Proceedings of the AAAI Conference on Artificial
  Intelligence}}.
\newblock


\bibitem[Xia et~al\mbox{.}(2021c)]%
        {xia2021graph}
\bibfield{author}{\bibinfo{person}{Lianghao Xia}, \bibinfo{person}{Yong Xu},
  \bibinfo{person}{Chao Huang}, \bibinfo{person}{Peng Dai}, {and}
  \bibinfo{person}{Liefeng Bo}.} \bibinfo{year}{2021}\natexlab{c}.
\newblock \showarticletitle{Graph meta network for multi-behavior
  recommendation}. In \bibinfo{booktitle}{\emph{Proceedings of the
  International ACM SIGIR Conference on Research \& Development in Information
  Retrieval}}.
\newblock


\bibitem[Xiao et~al\mbox{.}(2017)]%
        {xiao2017learning}
\bibfield{author}{\bibinfo{person}{Lin Xiao}, \bibinfo{person}{Zhang Min},
  \bibinfo{person}{Zhang Yongfeng}, \bibinfo{person}{Liu Yiqun}, {and}
  \bibinfo{person}{Ma Shaoping}.} \bibinfo{year}{2017}\natexlab{}.
\newblock \showarticletitle{Learning and transferring social and item
  visibilities for personalized recommendation}. In
  \bibinfo{booktitle}{\emph{Proceedings of the International Conference on
  Information and Knowledge Management}}.
\newblock


\bibitem[Xin et~al\mbox{.}(2018)]%
        {xin2018batch}
\bibfield{author}{\bibinfo{person}{Xin Xin}, \bibinfo{person}{Fajie Yuan},
  \bibinfo{person}{Xiangnan He}, {and} \bibinfo{person}{Joemon~M Jose}.}
  \bibinfo{year}{2018}\natexlab{}.
\newblock \showarticletitle{Batch is not heavy: Learning word representations
  from all samples}. In \bibinfo{booktitle}{\emph{Proc. of ACL}}.
\newblock


\bibitem[Yang et~al\mbox{.}(2021)]%
        {yang2021hyper}
\bibfield{author}{\bibinfo{person}{Haoran Yang}, \bibinfo{person}{Hongxu Chen},
  \bibinfo{person}{Lin Li}, \bibinfo{person}{S~Yu Philip}, {and}
  \bibinfo{person}{Guandong Xu}.} \bibinfo{year}{2021}\natexlab{}.
\newblock \showarticletitle{Hyper Meta-Path Contrastive Learning for
  Multi-Behavior Recommendation}. In \bibinfo{booktitle}{\emph{Proceedings of
  the International Conference on Data Mining}}.
\newblock


\bibitem[Yang et~al\mbox{.}(2018)]%
        {yang2017improving}
\bibfield{author}{\bibinfo{person}{Yongjian Yang}, \bibinfo{person}{Yuanbo Xu},
  \bibinfo{person}{En Wang}, \bibinfo{person}{Jiayu Han}, {and}
  \bibinfo{person}{Zhiwen Yu}.} \bibinfo{year}{2018}\natexlab{}.
\newblock \showarticletitle{Improving existing collaborative filtering
  recommendations via serendipity-based algorithm}.
\newblock \bibinfo{journal}{\emph{IEEE Transactions on Multimedia}}
  \bibinfo{volume}{20}, \bibinfo{number}{7} (\bibinfo{year}{2018}),
  \bibinfo{pages}{1888--1900}.
\newblock


\bibitem[Yu and Qin(2020)]%
        {yu2020sampler}
\bibfield{author}{\bibinfo{person}{Wenhui Yu} {and} \bibinfo{person}{Zheng
  Qin}.} \bibinfo{year}{2020}\natexlab{}.
\newblock \showarticletitle{Sampler Design for Implicit Feedback Data by
  Noisy-label Robust Learning}. In \bibinfo{booktitle}{\emph{Proceedings of the
  International ACM SIGIR Conference on Research \& Development in Information
  Retrieval}}.
\newblock


\bibitem[Zhang et~al\mbox{.}(2021)]%
        {zhang2021hybrid}
\bibfield{author}{\bibinfo{person}{Hangbin Zhang}, \bibinfo{person}{Raymond~K
  Wong}, {and} \bibinfo{person}{Victor~W Chu}.}
  \bibinfo{year}{2021}\natexlab{}.
\newblock \showarticletitle{Hybrid Variational Autoencoder for Recommender
  Systems}.
\newblock \bibinfo{journal}{\emph{ACM Transactions on Knowledge Discovery from
  Data}} \bibinfo{volume}{16}, \bibinfo{number}{2} (\bibinfo{year}{2021}),
  \bibinfo{pages}{1--37}.
\newblock


\bibitem[Zhao et~al\mbox{.}(2015)]%
        {zhao2015improving}
\bibfield{author}{\bibinfo{person}{Zhe Zhao}, \bibinfo{person}{Zhiyuan Cheng},
  \bibinfo{person}{Lichan Hong}, {and} \bibinfo{person}{Ed~H Chi}.}
  \bibinfo{year}{2015}\natexlab{}.
\newblock \showarticletitle{Improving user topic interest profiles by behavior
  factorization}. In \bibinfo{booktitle}{\emph{Proceedings of the Web
  Conference}}.
\newblock


\bibitem[Zhou et~al\mbox{.}(2019)]%
        {zhou2019ijcai}
\bibfield{author}{\bibinfo{person}{Xiao Zhou}, \bibinfo{person}{Danyang Liu},
  \bibinfo{person}{Jianxun Lian}, {and} \bibinfo{person}{Xing Xie}.}
  \bibinfo{year}{2019}\natexlab{}.
\newblock \showarticletitle{Collaborative Metric Learning with Memory Network
  for Multi-Relational Recommender Systems}. In
  \bibinfo{booktitle}{\emph{Proceedings of the International Joint Conference
  on Artificial Intelligence}}.
\newblock


\end{thebibliography}

%%
%% If your work has an appendix, this is the place to put it.

\
\end{document}